\newlist{inlinelist}{enumerate*}{1}
\setlist*[inlinelist]{label=(\roman*)}
\theoremstyle{plain}
\newtheorem{theorem}{Theorem}
\newtheorem{lemma}[theorem]{Lemma}
\newtheorem{assumption}{Assumption}
\crefname{assumption}{assumption}{assumptions}
\Crefname{Assumption}{Assumption}{Assumptions}
\theoremstyle{remark}
\DeclareMathOperator{\cov}{cov}
\DeclareMathOperator\supp{supp}
\DeclareMathOperator{\var}{var}
\newcommand{\EE}{\mathbb{E}}
\newcommand{\NN}{\mathbb{N}}
\newcommand{\PP}{\mathbb{P}}
\newcommand{\RR}{\mathbb{R}}
\newcommand{\Normal}{\mathcal{N}}
\newcommand{\bigO}{\mathcal{O}}
\newcommand{\regret}{\mathcal{R}}
\newcommand{\hG}{\widehat{G}}
\newcommand{\heta}{\widehat\eta}
\newcommand{\htheta}{\widehat\theta}
\newcommand{\J}[1]{J_{\mathrm{#1}}}
\begin{document}

\title{Empirical Bayes Selection for Value Maximization}

\author{Dominic Coey}
\authornote{Both authors contributed equally to this research.}
\email{coey@meta.com}
\orcid{0000-0002-3040-9662}
\author{Kenneth Hung}
\authornotemark[1]
\email{kenhung@meta.com}
\orcid{0000-0002-3911-182X}
\affiliation{%
  \department{Central Applied Science}
  \institution{Meta}
  \city{Menlo Park}
  \state{California}
  \country{USA}
}


\begin{abstract}
We study the problem of selecting the best $m$ units from a set of $n$ as $m / n \to \alpha \in (0, 1)$, where noisy, heteroskedastic measurements of the units' true values are available and the decision-maker wishes to maximize the aggregate true value of the units selected. Given a parametric prior distribution, the empirical Bayes decision rule incurs $\bigO_p(n^{-1})$ regret relative to the Bayesian oracle that knows the true prior. More generally, if the error in the estimated prior is of order $\bigO_p(r_n)$, regret is $\bigO_p(r_n^2)$. In this sense \emph{selection} of the best units is fundamentally easier than \emph{estimation} of their values. We show this regret bound is sharp in the parametric case, by giving an example in which it is attained. Using priors calibrated from a dataset of over four thousand internet experiments, we confirm that empirical Bayes methods perform well in detecting the best treatments with only a modest number of experiments.
\end{abstract}

\begin{CCSXML}
<ccs2012>
   <concept>
       <concept_id>10002950.10003648.10003662.10003667</concept_id>
       <concept_desc>Mathematics of computing~Density estimation</concept_desc>
       <concept_significance>300</concept_significance>
       </concept>
   <concept>
       <concept_id>10002944.10011123.10011131</concept_id>
       <concept_desc>General and reference~Experimentation</concept_desc>
       <concept_significance>500</concept_significance>
       </concept>
   <concept>
       <concept_id>10002951.10003227.10003241</concept_id>
       <concept_desc>Information systems~Decision support systems</concept_desc>
       <concept_significance>100</concept_significance>
       </concept>
 </ccs2012>
\end{CCSXML}

\ccsdesc[300]{Mathematics of computing~Density estimation}
\ccsdesc[500]{General and reference~Experimentation}
\ccsdesc[100]{Information systems~Decision support systems}
\keywords{Empirical Bayes, compound decisions, ranking and selection}

\received{3 February 2025}
\received[revised]{23 May 2025}
\received[accepted]{23 May 2025}

\maketitle

\newcommand\kddavailabilityurl{https://doi.org/10.5281/zenodo.15538137}

\ifdefempty{\kddavailabilityurl}{}{
\begingroup\small\noindent\raggedright\textbf{KDD Availability Link:}\\
The source code of the simulations that generate the plots has been made publicly available at \url{\kddavailabilityurl}.
\endgroup
}

\section{Introduction}
\label{sec:intro}

In many important scientific and economic applications, decision-makers are presented with data on the performance of $n$ units, from which they must select a strict subset for further investigation or treatment. Examples include identifying the best teachers, hospitals, or athletes (\citet{brown2008season,dimick2010ranking,chetty2014measuringII}); genes associated with particular outcomes (\citet{efron2002empirical}); drug candidates (\citet{yu2020bayes}); or in the application of this paper, internet experiments. Each unit is associated with an unobserved true value, which is measured with heteroskedastic noise. The constraint that only $m < n$ units can be selected arises naturally when the decision-maker has limited resources to devote to the chosen units, and must restrict attention to the most promising candidates.

A desirable feature of a selection procedure is that the aggregate value of its selections is close to the maximum attainable value. Understanding how different selection procedures perform in this respect enables decision-makers to assess the quality of their decisions in the preceding applications. The empirical Bayes approach to this question involves estimating the unknown, prior distribution from which the true values are drawn, and selecting units with the highest estimated posterior means. We show that if the prior distribution is known to lie within some parametric class, empirical Bayes incurs regret of order $\bigO_p(n^{-1})$\footnote{$\bigO_p(\cdot)$ is the stochastic $\bigO$ notation commonly used in statistics, as defined in \citep{van2000asymptotic}. We write $X_n = \bigO_p(a_n)$ if $X_n / a_n$ is bounded in probability.} relative to the oracle Bayes decision rule in which the prior distribution is known.\footnote{We refer to the oracle Bayes decision rule rather than simply the Bayes decision rule throughout, to emphasize that the prior is unknown to the decision-maker.} This is faster than the usual $n^{-1/2}$ parametric rate of convergence from the central limit theorem. In this sense \emph{selection} is fundamentally easier than \emph{estimation}: picking a set of units with low regret is easier than pinning down the precise values of those units. This generalizes directly to the nonparametric case: regret converges to zero at the square of the rate that estimation error in the prior converges to zero.

The basic intuition for this result follows. First, mistakes, whether of inclusion or exclusion, are only likely to happen for those units whose true values are sufficiently close to a critical threshold. Units comfortably (or below) above that threshold will be correctly selected (or omitted) with high probability (i.e.\ with probability converging to $1$, abbreviated as w.h.p.). Second, even those mistakes cannot be too costly, as units incorrectly included or excluded are likely to be marginal—almost good enough to be selected, or almost bad enough to be omitted. The regret, which is the product of two terms corresponding to these factors, will therefore be second-order small. We show that our $\bigO_p(n^{-1})$ bound is sharp in the parametric case, by constructing an example in which regret is at least $C n^{-1}$ with non-vanishing probability for some positive constant $C$.

We illustrate this result with simulations based on internet experimentation data, where units correspond to experiments, and values to treatment effects. Heteroskedasticity arises because experiments vary in sample size. Technology companies may wish to identify a subset of best- or worst-performing experiments for further investigation, in the former case, as candidates to launch to production, or in the latter case, as candidates to stop early. When the follow-up investigation incurs some cost, it may only be feasible to select a strict subset of experiments for more analysis. In this application, as in others, the cost of mistakes depends on their magnitude—that is, on the difference between the aggregate value of the units selected and the units that should have been selected. We simulate true effects from a scale mixture of mean-zero Gaussians calibrated on this dataset, and evaluate the regret of the empirical Bayes approach for selecting the top $10\%$ of experiments. Consistent with our theoretical results, we find that regret is $\bigO_p(n^{-1})$. By comparison, identifying the set of the top $10\%$ experiments with all misclassifications being equally penalized regardless of their magnitude, or estimating the treatment effects of the selected experiments, or estimating the prior distribution itself, are all structurally harder problems, each of which only exhibits convergence at the usual parametric rate. 

\subsection{Related Work}

Our work builds on several large and active strands of the statistics and econometrics literature. Foundational work introducing and developing the empirical Bayes approach to statistics includes \citet{robbins1956empirical,kiefer1956consistency,robbins1964empirical,efron1973stein}. Applications of the selection problem have proliferated, as the problem of discerning between units which perform well or poorly on the basis of noisy, heteroskedastic measurements describes many real-world settings of interest. Previous work has studied identifying the best teachers (\citet{kane2008does,jacob2008can, harris2014skills,chetty2014measuringI,gilraine2020new}), the best medical facilities (\citet{thomas1994empirical,goldstein1996league,dimick2010ranking,hull2018estimating}), the best baseball players (\citet{efron1977stein,brown2008season}); differentially expressed genes (\citet{efron2002empirical,smyth2004linear}); promising drug candidates (\citet{yu2020bayes}); geographic areas associated with the greatest intergenerational mobility (\citet{bergman2019creating}) or mortality (\citet{marshall1991mapping}), and employers exhibiting the most evidence of discrimination (\citet{kline2021reasonable}). Internet experiments are particularly well-suited to empirical Bayes methods (\citet{deng2015objective,goldberg2017decision,azevedo2019empirical,coey2019improving,azevedo2020b,guo2020empirical}) as datasets are often large enough for accurate estimation of flexibly-specified priors, and the experiment-level sampling error is typically close to normally distributed. For these applications, the aggregate value of the selected units will often be an important component of the decision-maker's utility function. Our results provide theoretical and empirical support for selection based on such methods.

The literature on post-selection inference, including \citet{dahiya1974estimation,cohen1989two,gupta2002multiple,fithian2014optimal,hung2019rank,andrews2019inference,guo2021inference}, also studies selection problems, but differs from the present work in that its chief focus is estimating the values, differences or ranks of the selected units, rather than analyzing the regret associated with the selection. \citet{dahiya1974estimation,cohen1989two} provide estimates for the value of a selection unit. \citet{gupta2002multiple,fithian2014optimal,hung2019rank,andrews2019inference,guo2021inference} largely aim at frequentist inferences. While the notion of regret we consider averages over draws from the distribution of units' true values, an alternative line of inquiry beyond the scope of this paper would be to characterize admissible and minimax decision rules for the frequentist analog of the regret we define, considering the units' values as fixed constants.

\citet{gu2020invidious,mogstad2024inference} both study similar selection problems to the one we analyze. \citet{gu2020invidious} take an empirical Bayes approach to selecting the best units while controlling the marginal false discovery rate; \citet{mogstad2024inference} assert frequentist control over the familywise error rate, which amounts to a zero-one loss based on the correctness of the ranks. Both consider loss functions different from ours.
In their frameworks, mistakenly selecting or omitting any unit incurs a discrete cost, whereas in ours the cost of mistakenly selecting or omitting a marginal unit near the selection threshold is small. We view these as complementary perspectives. While in some decision problems mistakes may be undesirable per se, the aggregate performance of the selected units is typically still of interest. For teacher evaluations, for example, policy-makers may rightly be concerned with guarantees over the number of teachers who are incorrectly fired (\citet{mogstad2022comment}), but may also wish to understand how well their selection procedure is performing from the students' perspective, in terms of aggregate teacher ``value-added''. In other contexts, as in internet experimentation or drug discovery, the aggregate value of the selection is the primary concern, and it is harder to justify caring about the number of mistakes per se.

Closely related to our paper is \cite{chen2022empirical}, which notes the importance of empirical Bayes top-$m$ selection to various social science applications, and derives regret bounds for the problem. Those rate results are more favorable than the ones we present in cases where we can recover the posterior mean but not the prior fast. However in other cases, e.g.\ the parametric case, \cite{chen2022empirical} bound regret by a term converging slower than $n^{-1/2}$, while we prove $n^{-1}$ convergence and show that rate cannot in general be improved upon. We summarize this comparison in \Cref{tab:rates}. Furthermore, while the nonparametric rate of convergence of estimated priors to the truth is generally only logarithmic even for optimal procedures \citep{carroll1988optimal,fan1991optimal}, we may often observe a faster rate of convergence (e.g.\ see \Cref{fig:w1}) in practice, which our result translates to a tighter bound on the regret.

\begin{table*}[btp]
    \centering
    \begin{tabular}{l c c}
        \toprule
        Class of priors & Our bound & Bound from \cite{chen2022empirical} \\
        \midrule
        Parametric & $\bigO_p(n^{-1})$ & $\tilde\bigO(n^{-1/2})$ \\
        Finite support of unknown cardinality \cite{chen1995optimal} & $\bigO_p(n^{-1/2})$ & $\tilde\bigO(n^{-1/2})$ \\
        Density function has $k$ bounded derivative with bounded support \cite{carroll1988optimal} & $\bigO_p((\log n)^{-k})$ & $\tilde\bigO(n^{-1/2})$ \\ 
        \bottomrule
    \end{tabular}
    \caption{Summary of regret convergence rates given by our bound and the bound from \cite{chen2022empirical}. Note that $\tilde\bigO$ indicates some hidden $\log$-factors, and also the difference that \citet{chen2022empirical} bounds the expectation of regret when we bound it stochastically.}
    \label{tab:rates}
\end{table*}

The bound in \cite{chen2022empirical} goes through the mean squared error of the posterior means, which would be more pessimistic if the posterior mean of irrelevant items (e.g. those almost always or never selected) are hard to estimate. Meanwhile, our method relies on controlling the number of mistakes by estimating the distribution, which is likely more challenging than minimizing the mean squared error, leading to potential pessimism in a different way.

Our selection problem may remind readers of the multi-armed bandit literature that studies the problem of identifying the top $m$ arms with a certain probability, e.g.\ \citet{shang2020bai,chen2017top}. However to target the probability of correct selection is to consider discontinuous loss functions similar to ones in \citet{gu2020invidious,mogstad2024inference}. We also note that our selection problem is non-sequential which leads to new challenges, as poor choices of parameter in the prior cannot be overcome with additional samples in long run.

Finally, our work is related to the compound decision framework introduced in \citet{robbins1951asymptotically}, in which a simple decision is made for each unit and the overall loss is the sum of the loss from each individual decision. Convergence and rate results are available for empirical Bayes as applied to compound decision problems, e.g.\ \citet{hannan1965rate,van1977empirical,zhang1997empirical,gupta2005empirical,polyanskiy2021sharp}, but as \citet{weinstein2021permutation} observes this framework is rather restrictive and does not encompass the value maximization problem studied here of selecting the best $m$ of $n$ units. \citet{weinstein2021permutation} generalizes further to a class of simultaneous decision problems that are permutation invariant, which encompasses our problem of selecting $m$ units. However the optimal frequentist solution requires knowledge of the empirical c.d.f.\ (e.c.d.f.), or equivalently the order statistics, of the true effects $\mu_i$. Instead of studying the performance under pathological choices of $\mu_i$ as would be required in a minimax analysis, we take a more Bayesian approach to enable an analysis of regret without knowledge of the order statistics of the $\mu_i$'s.

\subsection{Our Contribution}

Our main contribution relative to this existing literature is to provide the first sharp regret bounds for parametric empirical Bayes selection, and to show how these same ideas extend to control regret in the nonparametric case. Our empirical work on internet experimentation complements this by verifying that regret is quantitatively modest in practice, given a reasonable number experiments of moderate precision. Together, our theoretical and empirical results suggest optimism for empirical Bayes approaches to selection when the decision-maker is primarily concerned with maximizing the aggregate value of the selected units, as opposed to correctly classifying the top units or estimating their values.

\section{The Top-\texorpdfstring{$m$}{m} Selection Problem}
\subsection{Setup}
\label{sec:setup}
There are $n$ units, each of which is associated with a unobserved true value $\mu_i \in \RR$ and an observed noise standard deviation $\sigma_i > 0$.\footnote{We assume $\sigma_i$ to be known, in line with past applications of empirical Bayes methods, e.g.\ \citet{weinstein2018group,guo2020empirical,deng2021post}.} The $\mu_i$ and $\sigma_i$ are distributed independently from each other and independently across experiments.\footnote{Independence of $\mu_i$ and $\sigma_i$ is a common maintained assumption in empirical Bayes methods, but may be unrealistic in some applications. \cite{chen2022empirical} treats this topic in detail.} Their unknown marginal distributions are denoted $G_0$ and $H_0$,
\[
    (\mu_i, \sigma_i) \sim G_0 \times H_0.
\]
We consider nondegenerate prior distributions $G$ belong to a potentially nonparametric family $\mathcal{M}$, that forms a metric space with $1$-Wasserstein distance $W_1(\cdot, \cdot)$. We assume the family is not misspecified, i.e. the family includes the truth $G_0$. For each unit $i$, the decision-maker observes a measurement $X_i \in \RR$, which is distributed as
\[
    X_i \mid \mu_i, \sigma_i \sim \Normal(\mu_i, \sigma_i^2).
\]
The decision-maker must choose $m$ units for some $m < n$. Their average utility given the index set of choices $J \subset \{1, 2, \ldots, n\}$ is $U(J) = \frac{1}{n}\sum_{i = 1}^n \mathds{1}(i \in J) \mu_i$. Let
\begin{equation}
\label{eq:post-mean}
    f_{G, \sigma_i}(X_i) = \frac{\int \mu \frac{1}{\sigma_i} \phi\left(\frac{X_i - \mu}{\sigma_i}\right) \,dG}{\int \frac{1}{\sigma_i} \phi\left(\frac{X_i - \mu}{\sigma_i}\right) \,dG}
\end{equation}
denote the posterior mean of $\mu_i$ given $X_i,\sigma_i$, assuming that the prior distribution of $\mu_i$ is $G$, where $\phi(\cdot)$ is the probability density function (p.d.f.) of a standard Gaussian. The true posterior mean is $f_{G_0, \sigma_i}(X_i)$. An estimator $\hG = \hG(X_1, \ldots, X_n, \allowbreak \sigma_1, \ldots, \sigma_n)$ of $G_0$ is available, where $\hG$ converges to $G_0$ at some rate $r_n$ in $1$-Wasserstein distance. It is used as the empirical Bayes prior, and in constructing posterior mean estimates, $f_{\hG, \sigma_i}(X_i)$. For simplicity we denote $f_{G_0, \sigma_i}(X_i)$ and $f_{\hG, \sigma_i}(X_i)$, the oracle and empirical Bayes posterior means for unit $i$, as $\theta_i$ and $\htheta_i$ respectively. We can view $\theta_i$ as being drawn i.i.d.\ from a distribution, which we denote $P$.

Given the observed data, an oracle Bayesian decision-maker maximizes expected utility (where the expectation is with respect to the posterior distribution over the unknown true values) by selecting the $m$ units with the highest values of $\theta_i$, breaking ties randomly. The empirical Bayes decision-maker mimics this rule, by selecting the $m$ units with the highest values of $\htheta_i$, breaking ties randomly. Letting $\J{EB}$ and $\J{Bayes}$ be the empirical Bayes and oracle Bayes choice sets, the regret
from empirical relative to oracle Bayes is
\begin{align}
    \regret &= \EE[U(\J{Bayes}) \mid X_1, \ldots, X_n, \sigma_1, \ldots, \sigma_n] - \\
    &\qquad \EE[U(\J{EB}) \mid X_1, \ldots, X_n, \sigma_1, \ldots, \sigma_n] \nonumber \\
    &= \frac{1}{n} \sum_{i = 1}^n (\mathds{1}(i \in \J{Bayes}) - \mathds{1}(i \in \J{EB})) \EE[\mu_i \mid X_i, \sigma_i] \nonumber \\
    &= \frac{1}{n} \sum_{i = 1}^n (\mathds{1}(i \in \J{Bayes}) - \mathds{1}(i \in \J{EB})) \theta_i, \label{eq:regret}
\end{align}
where $\mathds{1}(\cdot)$ is the indicator function.\footnote{We can also consider the loss $U(J) - U(\J{EB})$ for some other choice of benchmark $J$. However, other natural choices of $J$ may require oracle knowledge of the order statistics of the $\mu_i$'s, e.g.\ when $J$ is optimal among the class of permutation invariant choice sets (\citet{weinstein2021permutation}).} We aim to characterize how quickly $\regret$ converges to zero as $n \to \infty$ and $m / n \to \alpha \in (0, 1)$.\footnote{For simplicity of exposition, we only consider fixed $m$. We expect our main results to extend to the case where $m$ is allowed to be mildly data-driven, with $m/n \to \alpha$ in probability, as when selecting units with positive posterior means.} The proof that $\regret = \bigO_p(r_n^2)$ proceeds by bounding the regret $\regret$ by the product of two terms: the proportion of mistakes, and the maximum possible magnitude of the loss caused by a mistake. We show that each of these terms are of the same order as the estimation error in $\hG$, and consequently regret must be second-order small, i.e.\ $\bigO_p(r_n) \cdot \bigO_p(r_n) = \bigO_p(r_n^2)$.

Note that in the homoskedastic case when all variances are equal, $\sigma_1 = \cdots = \sigma_n$, the posterior mean of $\mu_i$ is monotone in $X_i$ for any choice of prior (\citet{efron2011tweedie,koenker2014convex}). Hence the oracle Bayes selection rule amounts to selecting the top-$m$ observations ordered by $X_i$, and this selection problem is trivial.

\subsection{Establishing a Convergence Bound}
\label{sec:convergence-bound}

To establish the convergence bound, we enlist \Cref{asm:g-rate,asm:compact-h}.

\begin{assumption}
\label{asm:g-rate}
    $W_1(\hG, G_0) = \bigO_p(r_n)$ for some sequence $(r_n)_{n \in \NN}$ with $r_n \ge n^{-1/2}$ for all $n$.
\end{assumption}

\begin{assumption}
\label{asm:compact-h}
    The support of the distribution $H_0$ of $\sigma_i$ is compact and bounded away from $0$.
\end{assumption}

\Cref{asm:g-rate} will be satisfied under mild conditions by the maximum likelihood estimator when $\mathcal{M}$ is parameterized by some finite-dimensional parameter $\eta$, with $r_n = n^{-1/2}$ (\citet[Theorem 9.14]{keener2010theoretical}).\footnote{The convergence rate holds for estimating $\eta$, but this translates to common families such as finite mixture families parameterized only by their weights and canonical exponential families with finite variance. For finite mixture families, note that the $1$-Wasserstein metric can be bounded by the product of the $L_1$-norm of the weight parameters and the maximum $1$-Wasserstein metric between any two mixture components. For exponential families, see \Cref{lma:exp-family} in \Cref{sec:real-analysis}.} This includes the commonly used ``normal-normal'' model, in which the prior is $\Normal(m_g, v_g)$ for unknown $m_g, v_g$ which are estimated by maximum likelihood. In the nonparametric case we will generally obtain slower rates of convergence, as allowed for by this assumption. For example, from the existing literature on convergence rates for deconvolution problems:
\begin{itemize}
    \item if the prior takes on some finite but unknown number of values, \citet{chen1995optimal} shows that the best possible convergence rate for estimating the prior in the $1$-Wasserstein metric\footnote{The $1$-Wasserstein metric is a natural choice here. We need a statistical distance that reflects the metric on the observation space, as our regret is tied to that metric as well. We also do not require the distributions to have the same support, or more precisely, be absolutely continuous with respect to $G_0$. Other common statistical distances such as total variation distance or Kullback--Leibler divergence do not meet these two desiderata.} is $n^{-1/4}$;
    \item if the prior has a density function with $k$ bounded derivatives, \citet{carroll1988optimal} shows that the fastest rate of convergence of any estimator of the prior is $(\log n)^{-k/2}$ in the $L_1$-norm of the p.d.f. Furthermore, if we assume the support of $G$ is bounded, the same rate of convergence applies to the $1$-Wasserstein metric.
\end{itemize}

\Cref{asm:compact-h} states that there are non-trivial upper and lower bounds on the precision with which the true values are measured, as would be the case in experiments with sample sizes bounded below and above.

Under \Cref{asm:g-rate,asm:compact-h}, our main result that $\regret = \bigO_p(r_n^2)$ follows. We give a brief overview of the proof strategy behind this theorem, before establishing supporting lemmas and giving the proof itself. Regret arises because our estimated posterior means, $\htheta_i$, are different from their oracle Bayes counterparts, $\theta_i$, and consequently the top units ranked by the former may differ from the top units ranked by the latter. The difference between $\htheta_i$ and $\theta_i$ is the error relative to oracle Bayes shrinkage for observation $i$. We show that regret can be bounded above by the product of the maximum magnitude of this shrinkage error from mistakes (whether of inclusion or exclusion) and the proportion of such mistakes. It suffices to show that each of these terms is $\bigO_p(r_n)$. Using the facts that the posterior mean function $f_{G,\sigma}(X_i)$ is sufficiently well-behaved around the $G_0$ when the observations $X_i$ belong to a compact set (\Cref{lma:lipschitz}), and that the $X_i$'s associated with all mistakes lie within a compact set w.h.p.\ (\Cref{lma:bounded-x}), we can show that the maximum magnitude of the shrinkage error from mistakes is bounded above by a constant times $W_1(\hG, G_0)$, and hence is $\bigO_p(r_n)$ by \Cref{asm:g-rate}. Next we argue that for any neighborhood around $P^{-1}(1 - \frac{m}{n})$ shrinking slower than $r_n$, the true values associated with mistakes will lie within that neighborhood w.h.p. This allows us to control the proportion of mistakes, and conclude that they are also $\bigO_p(r_n)$.


The following lemma is a key preliminary result, establishing continuity of both the posterior mean function $f_{G, \sigma}(X)$ and its inverse, and will be used to establish \Cref{lma:lipschitz,lma:bounded-x}. The existence of the inverse follows immediately from a classic result by \citet{efron2011tweedie}.

\begin{lemma}
\label{lma:continuity}
    Under \Cref{asm:compact-h}, the posterior mean function $f_{G, \sigma}(X)$ and its inverse $f^{-1}_{G, \sigma}(X)$ are both continuous in $(G, \sigma, X) \in \mathcal{M} \times \supp(H_0) \times \RR$.
\end{lemma}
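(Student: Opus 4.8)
The plan is to establish joint continuity of $f_{\eta,\sigma}(X)$ directly from its definition, and then obtain continuity of the inverse from a soft monotonicity argument built on \Cref{lma:shrinkage-increasing}. Write $k_{\sigma,X}(\mu) = \tfrac{1}{\sigma}\phi\big(\tfrac{X-\mu}{\sigma}\big)$, so that $f_{\eta,\sigma}(X) = N(\eta,\sigma,X)/D(\eta,\sigma,X)$ with $N(\eta,\sigma,X) = \int \mu\, k_{\sigma,X}(\mu)\,dG(\eta)$ and $D(\eta,\sigma,X) = \int k_{\sigma,X}(\mu)\,dG(\eta)$ the numerator and denominator of \eqref{eq:post-mean}. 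The single observation that drives everything is that, because of the Gaussian tails of $k_{\sigma,X}$, the functions $\mu\mapsto k_{\sigma,X}(\mu)$ and $\mu\mapsto \mu\, k_{\sigma,X}(\mu)$ are bounded and Lipschitz on all of $\RR$, with bounds on both the sup-norm and the Lipschitz constant that are \emph{uniform} over $(\sigma,X)$ ranging in any compact subset of $(0,\infty)\times\RR$ --- and it is precisely $\sigma$ being bounded away from $0$ (\Cref{asm:compact-h}) that keeps these constants finite. In particular $N$ and $D$ are finite with no moment assumption on $G(\eta)$, and $D>0$ everywhere since its integrand is strictly positive.

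Joint continuity of $N$ and $D$ then follows from the usual split $|N(\eta,\sigma,X)-N(\eta_0,\sigma_0,X_0)| \le |N(\eta,\sigma,X)-N(\eta_0,\sigma,X)| + |N(\eta_0,\sigma,X)-N(\eta_0,\sigma_0,X_0)|$ (and likewise for $D$). For the first term I would invoke \Cref{asm:g-lipschitz} together with Kantorovich--Rubinstein duality: $|N(\eta,\sigma,X)-N(\eta_0,\sigma,X)| \le \mathrm{Lip}\big(\mu\mapsto\mu\, k_{\sigma,X}(\mu)\big)\cdot W_1(G(\eta),G(\eta_0)) \le C\, d(\eta,\eta_0)$ for $\eta$ near $\eta_0$, where $C$ may be taken uniform over $(\sigma,X)$ in a compact neighbourhood of $(\sigma_0,X_0)$ by the uniform Lipschitz bound above. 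For the second term, $(\sigma,X)\mapsto k_{\sigma,X}(\mu)$ and $(\sigma,X)\mapsto \mu\, k_{\sigma,X}(\mu)$ are continuous for each fixed $\mu$ and, on a compact neighbourhood of $(\sigma_0,X_0)$, dominated by a constant, so dominated convergence gives continuity in $(\sigma,X)$. Combining the two yields joint continuity of $N$ and $D$, hence of $f_{\eta,\sigma}(X)=N/D$, the denominator being continuous and strictly positive.

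For the inverse, \Cref{lma:shrinkage-increasing} tells us $X\mapsto f_{\eta,\sigma}(X)$ is strictly increasing, so $f^{-1}_{\eta,\sigma}$ is well-defined on the open interval $R_{\eta,\sigma}:=f_{\eta,\sigma}(\RR)$; extending it to $\pm\infty$ off $R_{\eta,\sigma}$ (with the topology of $\overline{\RR}$) makes it a function on all of $\RR$, and in our applications it is only ever evaluated strictly inside the relevant ranges. Continuity is then the standard consequence of joint continuity of $f$ plus strict monotonicity in $X$: given $(\eta_0,\sigma_0,y_0)$ with $y_0=f_{\eta_0,\sigma_0}(X_0)$ and $\varepsilon>0$, strict monotonicity gives $f_{\eta_0,\sigma_0}(X_0-\varepsilon) < y_0 < f_{\eta_0,\sigma_0}(X_0+\varepsilon)$, joint continuity then supplies a neighbourhood of $(\eta_0,\sigma_0)$ and a $\delta>0$ on which $f_{\eta,\sigma}(X_0-\varepsilon)<y_0-\delta$ and $f_{\eta,\sigma}(X_0+\varepsilon)>y_0+\delta$, and monotonicity forces $f^{-1}_{\eta,\sigma}(y)\in(X_0-\varepsilon,X_0+\varepsilon)$ whenever $|y-y_0|<\delta$ and $(\eta,\sigma)$ lies in that neighbourhood.

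I do not anticipate a genuine obstacle: this is a regularity statement, and both ingredients --- $W_1$-duality continuity in $\eta$ and dominated-convergence continuity in $(\sigma,X)$ --- are routine once the uniform bounds from the Gaussian kernel are in hand. The two points that deserve a little care are (i) verifying cleanly that $\mu\mapsto\mu\, k_{\sigma,X}(\mu)$ is bounded and Lipschitz uniformly on compacts, which is exactly where \Cref{asm:compact-h} enters, and (ii) the mild bookkeeping around the domain of $f^{-1}_{\eta,\sigma}$, since $R_{\eta,\sigma}$ need not be all of $\RR$; this is cosmetic and handled by the convention above or by restricting attention to values in the interior of the ranges.
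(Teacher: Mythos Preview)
Your proposal is correct and follows essentially the same route as the paper: write $f$ as a ratio of integrals, obtain joint continuity of numerator and denominator via Kantorovich--Rubinstein duality for the $\eta$-variation and a convergence argument for the $(\sigma,X)$-variation, then deduce continuity of the inverse from strict monotonicity in $X$. The only cosmetic differences are that the paper handles the $(\sigma,X)$-variation by uniform convergence of the integrand (against the moving measure $G(\eta_k)$) rather than dominated convergence against a fixed measure, and it packages your direct $\varepsilon$--$\delta$ inverse argument into a separate lemma (\Cref{lma:stack-ex}); your treatment of the domain of $f^{-1}_{\eta,\sigma}$ is in fact slightly more careful than the paper's.
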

\begin{proof}
    We first prove that $f_{G, \sigma}(X)$ is continuous in $(G, \sigma, X)$. From \eqref{eq:post-mean},
    \begin{align}
        \label{eq:f-explicit}
        f_{G, \sigma}(X) 
        &= \frac{\int \mu \phi\left(\frac{X - \mu}{\sigma}\right) \,dG}{\int \phi\left(\frac{X - \mu}{\sigma}\right) \,dG} \coloneqq \frac{h_1(G, \sigma, X)}{h_0(G, \sigma, X)},  
    \end{align}

    where $\phi(\cdot)$ is the p.d.f.\ of a standard Gaussian. As $h_0 > 0$, it suffices to show that $h_0$ and $h_1$ are continuous.

    Suppose we have a sequence $(G_k, \sigma_k, X_k) \to (G^*, \sigma^*, X^*)$ as $k \to \infty$. For $h_1$, we wish to show that
    \[
        \int \mu \phi\left(\frac{X_k - \mu}{\sigma_k}\right) \,dG_k \to \int \mu \phi\left(\frac{X^* - \mu}{\sigma^*}\right) \,dG^*.
    \]
    Note that the function sequence $\mu \phi\left(\frac{X_k - \mu}{\sigma_k}\right)$ converges uniformly to $\mu \phi\left(\frac{X^* - \mu}{\sigma^*}\right).$\footnote{For any $\varepsilon > 0$, there exists a compact interval $C_\varepsilon$ such that $\mu \phi\left(\frac{X_k - \mu}{\sigma_k}\right) < \varepsilon$ on $C_\varepsilon^c$. The function sequence itself is equicontinuous and converges pointwise, so it also converges uniformly within $C_\varepsilon$. Hence for any $\varepsilon > 0$ there is sufficiently large $k$ such that $\mu \phi\left(\frac{X_k - \mu}{\sigma_k}\right)$ is within $\varepsilon$ of $\mu \phi\left(\frac{X^* - \mu}{\sigma^*}\right)$ pointwise.} Hence for any $\varepsilon > 0$, when $k$ is sufficiently large, we have
    \begin{align}
        {}& \left| \int \mu \phi\left(\frac{X_k - \mu}{\sigma_k}\right) \,dG_k - \int \mu \phi\left(\frac{X^* - \mu}{\sigma^*}\right) \,dG_k \right| \nonumber \\
        \le{}& \sup_{\mu \in \RR} \left| \mu \phi\left(\frac{X_k - \mu}{\sigma_k}\right) - \int \mu \phi\left(\frac{X^* - \mu}{\sigma^*}\right) \right| \nonumber \\
        <{}& \varepsilon / 2. \label{eq:uniform-convergence}
    \end{align}
    Note also that $\mu \phi\left(\frac{X^* - \mu}{\sigma^*}\right)$ is Lipschitz. Therefore by Kantorovich--Rubinstein duality, when $k$ is sufficiently large, $W_1(G_k, G^*)$ is sufficiently small and
    \begin{equation}
    \label{eq:kr-duality}
        \left|\int \mu \phi\left(\frac{X^* - \mu}{\sigma^*}\right) \,dG_k - \int \mu \phi\left(\frac{X^* - \mu}{\sigma^*}\right) \,dG^*\right| < \varepsilon / 2.
    \end{equation}
    Summing up \eqref{eq:uniform-convergence} and \eqref{eq:kr-duality} yields the convergence of the numerator. The proof for $h_0$ is almost identical, as $\phi\left(\frac{X_k - \mu}{\sigma_k}\right)$ converges uniformly to $\phi\left(\frac{X^* - \mu}{\sigma^*}\right)$ and $\phi\left(\frac{X^* - \mu}{\sigma^*}\right)$ is Lipschitz. The continuity of $f_{G, \sigma}^{-1}(X)$ follows from the continuity of $f_{G, \sigma}(X)$ by \Cref{lma:shrinkage-increasing,lma:stack-ex}.
\end{proof}

The next lemma states that posterior mean function $f_{G, \sigma}(X)$ is locally Lipschitz around $G_0$, uniformly in $(\sigma, X) \in \supp(H_0) \times W$, for any compact $W$. This will be used in \Cref{thm:rn-squared} to bound the shrinkage error by a constant times the estimation error in the prior parameter, $\hG$.

\begin{lemma}
\label{lma:lipschitz}
    Suppose \Cref{asm:compact-h} holds. Then for any compact $W \subset \RR$, there exist positive constants $K$, $\delta$ such that for all $(G, \sigma, X) \in \mathcal{M} \times \supp(H_0) \times W$, we have $|f_{G, \sigma}(X) - f_{G_0, \sigma}(X)| \le K W_1(G, G_0)$ whenever $W_1(G, G_0) < \delta$.
\end{lemma}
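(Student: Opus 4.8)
The plan is to work with the ratio representation $f_{\eta,\sigma}(X) = h_1(\eta,\sigma,X)/h_0(\eta,\sigma,X)$ from \eqref{eq:f-explicit}; fix $(\sigma,X) \in \supp(H_0) \times W$ and abbreviate $h_j(\eta) \coloneqq h_j(\eta,\sigma,X)$. The elementary identity
\[
    \frac{h_1(\eta)}{h_0(\eta)} - \frac{h_1(\eta_0)}{h_0(\eta_0)} = \frac{[h_1(\eta) - h_1(\eta_0)]\,h_0(\eta_0) + h_1(\eta_0)\,[h_0(\eta_0) - h_0(\eta)]}{h_0(\eta)\,h_0(\eta_0)}
\]
reduces the lemma to four estimates, each uniform over $(\sigma,X) \in \supp(H_0)\times W$: (i) $|h_j(\eta) - h_j(\eta_0)| \le C_j\, W_1(G(\eta),G(\eta_0))$ for $j \in \{0,1\}$; (ii) $h_0(\eta_0) \le A_0$ and $|h_1(\eta_0)| \le A_1$; (iii) $h_0(\eta)\,h_0(\eta_0) \ge 2c^2$ for $\eta$ in a suitable neighborhood of $\eta_0$; and (iv) $W_1(G(\eta),G(\eta_0)) \le L\, d(\eta,\eta_0)$ for $d(\eta,\eta_0)$ small, which is precisely the local Lipschitz continuity at $\eta_0$ supplied by \Cref{asm:g-lipschitz}.

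For (i) and (ii), the point is that for fixed $(\sigma,X)$ the integrands $\mu \mapsto \phi\big((X-\mu)/\sigma\big)$ and $\mu \mapsto \mu\,\phi\big((X-\mu)/\sigma\big)$ are globally Lipschitz and bounded in $\mu$, with constants that can be chosen uniform over $(\sigma,X) \in \supp(H_0)\times W$. Substituting $z = (X-\mu)/\sigma$, their $\mu$-derivatives are $\sigma^{-1}z\phi(z)$ and $\phi(z)\big(1 + (X/\sigma)z - z^2\big)$; since $\phi(z)$, $z\phi(z)$, $z^2\phi(z)$ are bounded by absolute constants and $1/\sigma$, $|X/\sigma|$ are bounded — $\supp(H_0)$ being compact and bounded away from $0$ (\Cref{asm:compact-h}) and $W$ compact — one obtains Lipschitz constants $C_0,C_1$ free of $(\sigma,X)$, and likewise $|\mu\,\phi((X-\mu)/\sigma)| = |X-\sigma z|\,\phi(z) \le |X|/\sqrt{2\pi} + \sigma/\sqrt{2\pi e}$ and $\phi((X-\mu)/\sigma) \le 1/\sqrt{2\pi}$ are uniformly bounded in $\mu$, which after integrating against $G(\eta_0)$ gives the uniform bounds $A_1$ on $|h_1(\eta_0)|$ and $A_0$ on $h_0(\eta_0)$. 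Kantorovich--Rubinstein duality applied to these uniformly Lipschitz integrands then yields (i).

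For (iii), I would first note that $(\sigma,X) \mapsto h_0(\eta_0,\sigma,X)$ is continuous by \Cref{lma:continuity} and strictly positive (as observed in the proof of that lemma), hence bounded below by some $2c > 0$ on the compact set $\supp(H_0)\times W$. Combining (i) for $j=0$ with the Lipschitz bound of (iv) — valid with some constant $L$ for $d(\eta,\eta_0) < \delta_0$ — gives $|h_0(\eta,\sigma,X) - h_0(\eta_0,\sigma,X)| \le C_0 L\, d(\eta,\eta_0) < c$ uniformly in $(\sigma,X)$ once $d(\eta,\eta_0) < \delta \coloneqq \min\{\delta_0,\, c/(C_0 L)\}$, so $h_0(\eta,\sigma,X) \ge c$ on that range and the denominator of the identity is at least $2c^2$. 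Assembling everything, for $d(\eta,\eta_0) < \delta$ the numerator of the identity is at most $(A_0 C_1 + A_1 C_0)\,W_1(G(\eta),G(\eta_0)) \le (A_0 C_1 + A_1 C_0) L\, d(\eta,\eta_0)$, so
\[
    |f_{\eta,\sigma}(X) - f_{\eta_0,\sigma}(X)| \le \frac{(A_0 C_1 + A_1 C_0)\,L}{2c^2}\, d(\eta,\eta_0) \eqqcolon K\, d(\eta,\eta_0),
\]
uniformly over $(\sigma,X) \in \supp(H_0)\times W$, which is the claim.

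I expect the main obstacle to be step (iii): obtaining a lower bound on the evidence $h_0$ that is uniform both over the compact set of $(\sigma,X)$ and over an entire neighborhood of $\eta_0$. The strategy is to secure such a bound at $\eta_0$ by compactness and continuity, then transfer it to nearby $\eta$ using the Lipschitz-in-$\eta$ control already established in (i). Everything else is bookkeeping of Gaussian-tail constants, plus the observation that \Cref{asm:g-lipschitz} is used only through its behavior at $\eta_0$ — consistent with the paper's remark that a version localized at $\eta_0$ would suffice — and that this local bound in particular keeps $W_1(G(\eta),G(\eta_0))$ finite near $\eta_0$, so the duality estimate in (i) is not vacuous.
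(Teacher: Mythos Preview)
Your proof is correct and follows essentially the same route as the paper: the identical ratio decomposition via $h_0,h_1$, Kantorovich--Rubinstein duality applied to the uniformly Lipschitz integrands, and a uniform lower bound on $h_0$ obtained by continuity/compactness at $\eta_0$ and then transferred to nearby $\eta$. The only cosmetic difference is that you compute the Lipschitz and boundedness constants for the Gaussian integrands explicitly, whereas the paper appeals more abstractly to continuity on compacts; the logical structure is the same.
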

\begin{proof}
    Using the same definition for $h_0$ and $h_1$ as in \eqref{eq:f-explicit}, we have
    \begin{align*}
        & |f_{G, \sigma}(X) - f_{G_0, \sigma}(X)| \\
        ={}& \left| \frac{h_1(G, \sigma, X)}{h_0(G, \sigma, X)} - \frac{h_1(G_0, \sigma, X)}{h_0(G_0, \sigma, X)} \right| \\
        \le{}& \frac{\left| h_1(G, \sigma, X) - h_1(G_0, \sigma, X) \right|}{h_0(G, \sigma, X)} \\
        & \quad + \frac{\left| h_1(G_0, \sigma, X) \right| \cdot \left| h_0(G_0, \sigma, X) - h_0(G, \sigma, X) \right|}{h_0(G, \sigma, X) h_0(G_0, \sigma, X)}.
    \end{align*}

    It remains to show the following claims for when $G$ is in a sufficiently small neighborhood of $G_0$:
    \begin{itemize}
        \item $h_0$ is bounded away from $0$: Since $h_0$ is continuous from the proof of \Cref{lma:continuity}, $h_0(G_0, \sigma, X)$ is strictly positive and $\supp(H_0) \times W$ is compact, for sufficiently small $\delta$, $h_0(G, \sigma, X)$ is bounded away from $0$ whenever $W_1(G, G_0) < \delta$.
        \item $h_1$ is Lipschitz in $G$ with a Lipschitz constant that does not depend on $\sigma$ or $X$: The integrand in $h_1$ is $\mu \phi\left(\frac{X - \mu}{\sigma}\right)$, a Lipschitz function in $\mu$. Since this Lipschitz constant is a continuous function in $\sigma, X$ and $\supp(H_0) \times W$ is compact, $\mu \phi\left(\frac{X - \mu}{\sigma}\right)$ is uniformly Lipschitz. The function $h_1$ is Lipschitz in $G$ again by Kantorovich--Rubinstein duality.
        \item $h_1(G_0, \sigma, X)$ is bounded: From the proof of \Cref{lma:continuity}, $h_1$ and thus $h_1(G_0, \cdot, \cdot)$ are bounded. So $h_1(G_0, \sigma, X)$ is bounded since $\supp(H_0) \times W$ is compact. \qedhere
    \end{itemize}
\end{proof}

We will apply \Cref{lma:lipschitz} on a specific $W$ which contains all of the observations corresponding to mistakes made by empirical Bayes selection w.h.p. This is the subject of the following lemma. We use $\triangle$ to denote symmetric difference, so $\J{Bayes} \triangle \J{EB}$ is the index set of all mistakes.

\begin{lemma}
\label{lma:bounded-x}
    If \Cref{asm:g-rate,asm:compact-h} hold, there exists a compact set $W$ such that $X_i \in W$ for all $i \in \J{Bayes} \triangle \J{EB}$ w.h.p.
\end{lemma}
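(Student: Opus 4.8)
The plan is to show that, w.h.p., every observation $X_i$ associated with a mistake has its oracle posterior mean $\theta_i = f_{\eta_0,\sigma_i}(X_i)$ close to the selection threshold $P^{-1}(1-\alpha)$ (with $\alpha = \lim m/n$), and then to invert $f_{\eta_0,\sigma}$ to conclude that $X_i$ itself lies in a fixed compact set. The starting observation is that for $i \in \J{Bayes} \triangle \J{EB}$, the orderings of $\{\theta_j\}$ and $\{\htheta_j\}$ disagree about whether $i$ is among the top $m$, so $\theta_i$ must lie between the $m$-th largest value of $\{\theta_j\}$ and the $m$-th largest value of $\{\htheta_j\}$; more precisely, $|\theta_i - \theta_{(m)}| \le \max_j |\htheta_j - \theta_j|$ where $\theta_{(m)}$ is the $m$-th order statistic of the oracle posterior means. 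By \Cref{lma:lipschitz} applied on a compact $W$ (a mild circularity resolved below), $\max_{j}|\htheta_j - \theta_j| \le K\, d(\heta,\eta_0) = \bigO_p(r_n) \to 0$; and by the Glivenko--Cantelli theorem applied to the i.i.d.\ draws $\theta_j \sim P$, the empirical quantile $\theta_{(m)}$ converges in probability to $t^* \coloneqq P^{-1}(1-\alpha)$ (using $m/n \to \alpha$ and nondegeneracy/continuity of $P$ at $t^*$, which follows since $G(\eta_0)$ is nondegenerate and the $\sigma_i$ are non-atomic-free in the sense that $\theta$ has no atom at a generic threshold — or, more carefully, one takes $t^*$ to be any point with $P((t^*-\epsilon,t^*+\epsilon))>0$). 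Hence, w.h.p., every mistake index $i$ satisfies $\theta_i \in [t^* - \epsilon_0,\, t^* + \epsilon_0]$ for some fixed $\epsilon_0$.

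Next I would translate this bound on $\theta_i$ into a bound on $X_i$. By \Cref{lma:shrinkage-increasing}, $f_{\eta_0,\sigma}$ is strictly increasing in $X$ for each fixed $\sigma$, so it has a well-defined inverse $f^{-1}_{\eta_0,\sigma}$, which by \Cref{lma:continuity} is jointly continuous in $(\sigma, X)$ — hence in particular in $(\sigma, \theta)$ — on $\supp(H_0) \times \RR$. Since $\supp(H_0)$ is compact by \Cref{asm:compact-h} and $[t^*-\epsilon_0, t^*+\epsilon_0]$ is compact, the image
\[
    W \coloneqq \bigl\{\, f^{-1}_{\eta_0,\sigma}(\theta) : \sigma \in \supp(H_0),\ \theta \in [t^*-\epsilon_0,\ t^*+\epsilon_0]\, \bigr\}
\]
is the continuous image of a compact set, hence compact. (One should check that $[t^*-\epsilon_0,t^*+\epsilon_0]$ is contained in the range of $f_{\eta_0,\sigma}$ for every $\sigma \in \supp(H_0)$; since $G(\eta_0)$ is nondegenerate this range is an open interval whose closure contains the essential range of $\mu$, and shrinking $\epsilon_0$ if necessary guarantees containment uniformly over the compact set $\supp(H_0)$.) Then w.h.p.\ $X_i = f^{-1}_{\eta_0,\sigma_i}(\theta_i) \in W$ for all $i \in \J{Bayes}\triangle\J{EB}$, which is the claim.

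The one subtlety — and the main thing to handle with care — is the apparent circularity: bounding $\max_j|\htheta_j-\theta_j|$ via \Cref{lma:lipschitz} requires a compact $W$ containing the relevant $X_j$'s, but here the relevant $X_j$'s are exactly those at mistake indices, which is what we are trying to locate. I would break this by a two-stage argument. First, apply \Cref{lma:lipschitz} on an arbitrary fixed compact set, say $W_0 = [-\log n, \log n]$ (or any slowly growing window), and note that by standard Gaussian tail bounds together with \Cref{asm:compact-h} (so $\sigma_i$ bounded) and the fact that $\mu_i$ has a fixed distribution $G_0$, all $n$ observations $X_j$ lie in $W_0$ w.h.p. — wait, this still grows; instead, observe that a mistake at index $i$ requires $\htheta_i$ to be near $\theta_{(m)}$, and $\theta_{(m)}$ is w.h.p.\ in a fixed compact set by Glivenko--Cantelli alone (no Lipschitz input needed); meanwhile $\htheta_i$ itself is bounded uniformly because $f_{\heta,\sigma}$ is a posterior mean and hence lies in the convex hull of $\supp(G(\heta))$, which is controlled once $\heta$ is within a fixed neighborhood of $\eta_0$ (using \Cref{asm:g-lipschitz} to bound the Wasserstein distance, hence the spread, of $G(\heta)$). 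This gives a fixed compact bound on $\htheta_i$ for mistakes, w.h.p.; combined with a fixed compact bound on $\theta_i$ for mistakes (from the same threshold argument, now needing $|\htheta_i - \theta_{(m)}|$ bounded, which we just obtained), we get $\theta_i$ in a fixed compact interval, and then invert as above. In the write-up I would simply state the fixed compact bound on the empirical posterior means as a short preliminary observation and then run the threshold/inversion argument cleanly.
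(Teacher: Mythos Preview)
Your overall strategy---localize $\theta_i$ for mistake indices near the threshold, then invert $f_{\eta_0,\sigma}$---is sound and overlaps with the paper's proof for the ``oracle'' one-sided bounds (e.g.\ $i \in \J{Bayes}$ gives $\theta_i \ge \theta^* > P^{-1}(1-\alpha)-c$ w.h.p., hence $X_i$ is bounded below via $f^{-1}_{\eta_0,\sigma_i}$). But the circularity you flag is real and your proposed resolution does not work. The ``convex hull of $\supp(G(\heta))$'' argument fails whenever the prior has unbounded support, which already includes the basic normal case $G(\eta)=\Normal(\eta,1)$: the posterior mean ranges over all of $\RR$ as $X$ varies, so this yields no bound on $\htheta_i$ at all. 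Small $W_1$-distance controls moments, not support. Moreover, your bound $|\theta_i-\theta_{(m)}|\le c\cdot\max_j|\htheta_j-\theta_j|$ takes the maximum over \emph{all} indices $j$, not only mistakes; since the $X_j$'s spread over all of $\RR$ as $n\to\infty$, \Cref{lma:lipschitz} does not apply uniformly there, and in general $\sup_X|f_{\eta_0,\sigma}(X)-f_{\heta,\sigma}(X)|$ need not be controlled by $d(\heta,\eta_0)$.

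The paper sidesteps the circularity by never bounding $|\theta_i-\htheta_i|$ as an intermediate step. For $i\in\J{Bayes}\setminus\J{EB}$, the missing upper bound on $X_i$ is obtained as follows: since the symmetric difference is nonempty there exists $j\in\J{EB}\setminus\J{Bayes}$, and for that $j$ one has $\theta_j\le\theta^*<P^{-1}(1-\alpha)+c$, hence $X_j\le f^{-1}_{\eta_0,\sigma_j}(P^{-1}(1-\alpha)+c)$; then $\htheta_i\le\htheta_j=f_{\heta,\sigma_j}(X_j)$ is bounded by $\max_{\sigma'}f_{\heta,\sigma'}(f^{-1}_{\eta_0,\sigma'}(P^{-1}(1-\alpha)+c))$, and finally $X_i=f^{-1}_{\heta,\sigma_i}(\htheta_i)$ is bounded by $\max_{\sigma',\sigma''}f^{-1}_{\heta,\sigma''}(f_{\heta,\sigma'}(f^{-1}_{\eta_0,\sigma'}(P^{-1}(1-\alpha)+c)))$. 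By \Cref{lma:continuity} and the maximum theorem this last expression is continuous in $\heta$, hence uniformly bounded once $\heta$ lies in a small ball around $\eta_0$---which happens w.h.p.\ by \Cref{asm:eta-rate}. The idea you are missing is to invert with the \emph{estimated} map $f^{-1}_{\heta,\sigma_i}$ and use joint continuity in $(\eta,\sigma,X)$, rather than first trying to pass from $\htheta_i$ back to $\theta_i$.
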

\begin{proof}
    Oracle Bayes selection essentially thresholds on $\theta^*$, the $m$-th largest order statistic of the $\theta_i$'s. For any $c > 0$, this threshold lands in $(P^{-1}(1 - \frac{m}{n}) - c, P^{-1}(1 - \frac{m}{n}) + c)$ w.h.p.\ and hence $(P^{-1}(1 - \alpha)- c, P^{-1}(1 - \alpha) + c)$ w.h.p., by \citet[Corollary 21.5 and discussion thereof]{van2000asymptotic}. Let $V$ be a open ball of $G_0$ in $1$-Wasserstein whose radius is fixed but to be determined later. By \Cref{asm:g-rate}, $\hG$ lies in $V$ w.h.p. For any $i \notin \J{EB}$, under the high probability event $A_n$ defined as $A_n = \{\theta^* \in (P^{-1}(1 - \alpha) - c, P^{-1}(1 - \alpha) + c)\} \cap \{\hG \in V\}$,
    \begin{align}
        \htheta_i & \le \min_{i' \in \J{EB}} \htheta_{i'} \nonumber \\
        & = \min_{i' \in \J{EB}} f_{\hG, \sigma_{i'}} \circ f^{-1}_{G_0, \sigma_{i'}}(\theta_{i'}) \nonumber \\
        & \le \min_{i' \in \J{EB}} \max_{\sigma' \in \supp(H_0)} f_{\hG, \sigma'} \circ f^{-1}_{G_0, \sigma'}(\theta_{i'}) \label{eq:swap} \\
        & = \max_{\sigma' \in \supp(H_0)} f_{\hG, \sigma'} \circ f^{-1}_{G_0, \sigma'}\left(\min_{i' \in \J{EB}}\theta_{i'}\right) \nonumber \\
        & \le \max_{\sigma' \in \supp(H_0)} f_{\hG, \sigma'} \circ f^{-1}_{G_0, \sigma'}\left(\min_{i' \in \J{Bayes}}\theta_i\right) \\
        & \le \max_{\sigma' \in \supp(H_0)} f_{\hG, \sigma'} \circ f^{-1}_{G_0, \sigma'}(P^{-1}(1 - \alpha) + c) \label{eq:ub-for-exclusions} \\
        X_i & \le \max_{\sigma' \in \supp(H_0)} f^{-1}_{\hG, \sigma_i} \circ f_{\hG, \sigma'} \circ f^{-1}_{G_0, \sigma'}(P^{-1}(1 - \alpha) + c) \label{eq:unshrink} \\
        & \le \max_{\sigma', \sigma'' \in \supp(H_0)} f^{-1}_{\hG, \sigma''} \circ f_{\hG, \sigma'} \circ f^{-1}_{G_0, \sigma'}(P^{-1}(1 - \alpha) + c) \label{eq:ub-for-x}
    \end{align}
    By \Cref{lma:shrinkage-increasing,lma:stack-ex}, the function $\max_{\sigma' \in \supp(H_0)} f_{\hG, \sigma'} \circ f^{-1}_{\hG, \sigma'}$ is a strictly increasing function, so we can move the minimum inside in \eqref{eq:swap}. Also applying $f^{-1}_{\hG, \sigma_i}$ to both sides of \eqref{eq:ub-for-exclusions} yields \eqref{eq:unshrink}. Note that the maximand in \eqref{eq:ub-for-x} is a composition of functions in $\hG, \sigma', \sigma''$ that are continuous by \Cref{lma:continuity}, hence also a continuous function itself. By maximum theorem and the compactness of $\supp(H_0)$, \eqref{eq:ub-for-x} is continuous in $\hG$ and locally bounded. In other words, for a sufficiently small open ball in $1$-Wasserstein, $V$, centered at $G_0$, \eqref{eq:ub-for-x} is bounded by some constant.

    On the other hand, for any $i \in \J{Bayes}$, under the event $A_n$,
    \begin{align*}
        X_i &= f^{-1}_{G_0, \sigma_i}(\theta_i) \\ 
            &\ge f^{-1}_{G_0, \sigma_i}(P^{-1}(1 - \alpha) - c) \\ &\ge \min_{\sigma' \in \supp(H_0)} f^{-1}_{G_0, \sigma'}(P^{-1}(1 - \alpha) - c).
    \end{align*}

    Together, there exists a constant bounded interval that contains all $i$ in $\J{Bayes} \setminus \J{EB}$ under the event $A_n$. Likewise, there is also a constant bounded interval contains all $i$ in $\J{EB} \setminus \J{Bayes}$ under the event $A_n$. Taking $W$ to be the union of these two intervals completes the proof.
\end{proof}

With these preliminaries we can prove our main result.

\begin{restatable}{theorem}{rnsquared}
\label{thm:rn-squared}
    If \Cref{asm:g-rate,asm:compact-h} hold, then $\regret = \bigO_p(r_n^2)$, the square of the rate of convergence for estimating the prior.
\end{restatable}
\begin{proof}
    We first decompose an upper bound for $\regret$ into two components.
    \begin{align}
        \regret &\le \frac{1}{n}\sum_{i = 1}^n (\mathds{1}(i \in \J{Bayes}) - \mathds{1}(i \in \J{EB})) \theta_i \nonumber \\
        & \qquad - \frac{1}{n}\sum_{i = 1}^n (\mathds{1}(i \in \J{Bayes}) - \mathds{1}(i \in \J{EB})) \htheta_i \label{eq:eb-ranks-by-hat} \\
        &= \frac{1}{n}\sum_{i = 1}^n (\mathds{1}(i \in \J{Bayes}) - \mathds{1}(i \in \J{EB})) (\theta_i - \htheta_i) \nonumber \\
        &\le \frac{1}{n} \left(\#(\J{Bayes} \setminus \J{EB}) + \#(\J{EB} \setminus \J{Bayes})\right) \nonumber \\
        & \qquad \cdot \max_{i \in \J{Bayes} \triangle \J{EB}} |\theta_i - \htheta_i| \label{eq:sym-mistakes} \\
        &= 2 \cdot \underbrace{\frac{1}{n} \#(\J{Bayes} \setminus \J{EB})}_{\text{proportion of mistakes}} \cdot \underbrace{\max_{i \in \J{Bayes} \triangle \J{EB}} |\theta_i - \htheta_i|}_{\text{max magnitude of shrinkage error}} \label{eq:two-parts},
    \end{align}
    where \eqref{eq:eb-ranks-by-hat} follows from the fact that $\J{EB}$ is the set of indices of the $m$ largest $\htheta_i$'s. In \eqref{eq:sym-mistakes}, since $\#\J{Bayes} = \#\J{EB}$, we have $\#(\J{Bayes} \setminus \J{EB}) = \#(\J{EB} \setminus \J{Bayes})$. From \eqref{eq:two-parts}, it suffices to bound the proportion of mistakes and the maximum magnitude of shrinkage error.

    We start by bounding the latter term. We denote the event that the observations associated with all mistakes belong in the set $W$ from \Cref{lma:bounded-x} and $\hG$ belongs to a neighborhood $V$ around $G_0$ by $A_n = \{X_i \in W \text { for all } i \in \J{Bayes} \triangle \J{EB}\} \cap \{\hG \in V\}$. By \Cref{lma:bounded-x}, $\PP(A_n) \to 1$, and under this high probability event, we have
    \begin{align}
        \max_{i \in \J{Bayes} \triangle \J{EB}} |\theta_i - \htheta_i| &= \max_{i \in \J{Bayes} \triangle \J{EB}} |f_{G_0, \sigma_i}(X_i) - f_{\hG, \sigma_i}(X_i)| \nonumber \\
        &\le \max_{\substack{X \in W \\ \sigma \in \supp(H_0)}} |f_{G_0, \sigma}(X) - f_{\hG, \sigma}(X)| \nonumber \\
        &\le K W_1(G_0, \hG). \label{eq:max-grad}
    \end{align}
    \eqref{eq:max-grad} follows from \Cref{lma:lipschitz}, and is $\bigO_p(r_n)$ by \Cref{asm:g-rate}. Consequently
    \begin{equation}
    \label{eq:max-diff}
        \max_{i \in \J{Bayes} \triangle \J{EB}} |\theta_i - \htheta_i| = \bigO_p(r_n).
    \end{equation}

    Next we bound the proportion of mistakes. Let $\theta^*$ denote the $m$-th largest order statistic of the $\theta_i$'s, and $\theta^{**} = P^{-1}(1 - m/n)$ the $(1 - m/n)^\textrm{th}$ quantile of $P$.
    
    For any nondecreasing sequence $(b_n)_{n \in \NN}$ with $\lim_{n \to \infty} b_n = \infty$, define the event $B_n$ as
    \[
    B_n = \left\{\max_{i \in \J{Bayes} \setminus \J{EB}} \left|\theta_i - \theta^{**}\right| \le b_n r_n \right\}.
    \]
    Then
    \begin{align*}
        {}& \frac{1}{n} \#(\J{Bayes} \setminus \J{EB}) \\
        \le{}& \mathds{1}(B_n^c) +  \mathds{1}(B_n) \frac{1}{n}\#\left\{i: \left|\theta_i - \theta^{**}\right| \le b_n r_n\right\}.
    \end{align*}
        
    We first argue that $\PP(B_n^c) \rightarrow 0$ and subsequently that $\frac{1}{n}\#\{i: |\theta_i - P^{-1}(1 - \frac{m}{n})| \le b_n r_n\} = \bigO_p(b_n r_n)$, giving $\frac{1}{n} \#(\J{Bayes} \setminus \J{EB}) = \bigO_p(b_n r_n)$. Since $(b_n)_{n\in\NN}$ was an arbitrary nondecreasing sequence converging to infinity, by \Cref{lma:op} in \Cref{sec:real-analysis} this implies $\frac{1}{n} \#(\J{Bayes} \setminus \J{EB}) = \bigO_p(r_n)$.

    For each $i$ in $\J{Bayes} \setminus \J{EB}$, empirical Bayes selection must have excluded it because some other shrinkage estimate was larger, i.e.\ $\htheta_i \le \htheta_{i'}$ for some $i' \in \J{EB} \setminus \J{Bayes}$. Hence for each $i \in \J{Bayes} \setminus \J{EB}$, there is a $i' \in \J{EB} \setminus \J{Bayes}$ such that $\theta^* \le \theta_i 
        \le \theta_{i'} + 2 \max_{i \in \J{Bayes} \triangle \J{EB}} |\theta_i - \htheta_i|
        \le \theta^* + 2 \max_{i \in \J{Bayes} \triangle \J{EB}} |\theta_i - \htheta_i|$
    and so
    \begin{equation}
    \label{eq:bound-diff}
        \max_{i \in \J{Bayes} \setminus \J{EB}} | \theta_i - \theta^*| \le 2 \max_{i \in \J{Bayes} \triangle \J{EB}} |\theta_i - \htheta_i|.
    \end{equation}
    From the triangle inequality and union bound, we have
    \begin{align*}
    & \PP(B_n^c) \le \PP\left(r_n^{-1}\max_{i \in \J{Bayes} \setminus \J{EB}} | \theta_i - \theta^*| > b_n/2 \right) + \\
    & \quad \PP\left(r_n^{-1} \left|\theta^* - \theta^{**}\right| > b_n/2\right).
    \end{align*}
    By \eqref{eq:max-diff} and \eqref{eq:bound-diff}, $r_n^{-1}\max_{i \in \J{Bayes} \setminus \J{EB}} | \theta_i - \theta^*| = \bigO_p(1)$. By \Cref{lma:p-diff}, given standard results on the convergence of sample quantiles (\citet[Corollary 21.5]{van2000asymptotic}), we have $r_n^{-1} | \theta^* - P^{-1}(1 - \frac{m}{n})| = \bigO_p(r_n^{-1} n^{-1/2})$. As $b_n \rightarrow \infty$, $\PP(B_n^c) \to 0$.

    The probability of $\theta_i$ that falls in $(P^{-1}(1 - \frac{m}{n}) - b_n r_n, P^{-1}(1 - \frac{m}{n}) + b_n r_n)$ is no greater than
    $
    P\left(\theta^{**} + b_n r_n\right) - P\left(\theta^{**} - b_n r_n\right) = \bigO(b_n r_n)
    $
    by the continuous differentiability of $P$ from \Cref{lma:p-diff}. So by Chebyshev's inequality, the proportion $\frac{1}{n}\#\{i: |\theta_i - P^{-1}(1 - \frac{m}{n})| < b_n r_n\}$ is $\bigO_p(b_n r_n)$, and by arbitrariness of $b_n$, it must also be $\bigO_p(r_n)$.

    Because the first part and second parts of \eqref{eq:two-parts} are both $\bigO_p(r_n)$, it follows that the regret $\regret$ is $\bigO_p(r_n^2)$.
\end{proof}

The two main estimation approaches for empirical Bayes are {\em $f$-modeling}, in which a model is specified for the observed outcomes, and {\em $g$-modeling}, in which a model is specified for the unobserved prior (\citet{efron2014two}). This theorem is consistent with either estimation approach. In the $f$-modelling case, if the estimated distribution for outcomes is consistent with some prior distribution for true effects, i.e.\ falls in the class characterized by \citet{guo2020empirical}, we can think of $\hG$ as the prior implicitly specified by deconvolving the estimated observation distribution. For $g$-modelling, we can interpret $\hG$ directly as the model specified for the unobserved prior.

The bound is also sharp when $r_n = n^{-1/2}$, as shown by our example in \Cref{sec:sharpness}.

\section{Sharpness of the Convergence Bound in the Parametric Case}
\label{sec:sharpness}

We provide an example where the regret satisfies $\regret \ge C n^{-1}$ with non-vanishing probability for some positive constant $C$. Let the location family $G(\eta) = \Normal(\eta, 1)$ be the model for the prior, where the scalar location parameter $\eta$ is estimated by maximum likelihood. In our example, the truth is $\eta_0 = 0$. We assume the standard deviation of the noise term is drawn i.i.d.\ from
\[
    \sigma_i = \begin{cases}
        1 & \text{with probability $1/2$, and} \\
        2 & \text{with probability $1/2$;}
    \end{cases}
\]
and we will select $m = \lfloor \alpha n \rfloor$ units.

The maximum likelihood estimator $\heta$ converges to $\eta_0$ at rate $n^{-1/2}$. The oracle Bayes shrunken estimate of the posterior mean is $\theta_i = \frac{1}{\sigma_i^2 + 1} X_i$ and the empirical Bayes estimate is $\htheta_i = \frac{1}{\sigma_i^2 + 1} X_i + \frac{\sigma_i^2}{\sigma_i^2 + 1} \heta$. In particular, the magnitude of $\htheta_i - \theta_i = \frac{\sigma_i^2}{\sigma_i^2 + 1} \heta$ increases with $\sigma_i$. In our setting $\theta_i$ and $\sigma_i$ are measurable with respect to the Lebesgue measure and the counting measure, respectively. The density of $(\theta_i, \sigma_i)$ with respect to the product measure is then $\frac{1}{2} \cdot \sqrt{2} \phi(\sqrt{2}\theta_i)$ for $\sigma_i = 1$ and $\frac{1}{2} \cdot \sqrt{5}\phi(\sqrt{5}\theta_i)$ for $\sigma_i = 2$. If we condition on $\theta^*$ and $\theta_i < \theta^*$, then $\theta_i$ are i.i.d. In fact $(\theta_i, \sigma_i) \mid \theta^*, \theta_i < \theta^*$ is i.i.d.\ with density
\begin{equation}
\label{eq:trunc-mix-low}
    \begin{cases}
        \frac{\sqrt{2}\phi(\sqrt{2}\theta_i)}{\Phi(\sqrt{2}\theta^*) + \Phi(\sqrt{5}\theta^*)} \mathds{1}(\theta_i < \theta^*) & \text{for } \sigma_i = 1, \\
        \frac{\sqrt{5}\phi(\sqrt{5}\theta_i)}{\Phi(\sqrt{2}\theta^*) + \Phi(\sqrt{5}\theta^*)} \mathds{1}(\theta_i < \theta^*) & \text{for } \sigma_i = 2,
    \end{cases}
\end{equation}
with respect to the product measure, where $\Phi(\cdot)$ is the c.d.f.\ of a standard Gaussian. Likewise, $(\theta_i, \sigma_i) \mid \theta^*, \theta_i > \theta^*$ is i.i.d.\ with density
\begin{equation}
\label{eq:trunc-mix-high}
    \begin{cases}
        \frac{\sqrt{2}\phi(\sqrt{2}\theta_i)}{1 - \Phi(\sqrt{2}\theta^*) + 1 - \Phi(\sqrt{5}\theta^*)} \mathds{1}(\theta_i > \theta^*) & \text{for } \sigma_i = 1, \\
        \frac{\sqrt{5}\phi(\sqrt{5}\theta_i)}{1 - \Phi(\sqrt{2}\theta^*) + 1 - \Phi(\sqrt{5}\theta^*)} \mathds{1}(\theta_i > \theta^*) & \text{for } \sigma_i = 2.
    \end{cases}
\end{equation}

Consider a compact interval $[\underline{a}, \bar{a}]$ that contains $P^{-1}(1 - \alpha)$ in its interior. Since $\theta^*$ converges to $P^{-1}(1 - \alpha)$ at rate $n^{-1/2}$, the event $A_n$ where the interval $(\theta^* - c n^{-1/2}, \theta^* + c n^{-1/2})$ is a subset of $[\underline{a}, \bar{a}]$ happens w.h.p.\ for any positive constant $c > 0$. For any such $\theta^*$, the density in \eqref{eq:trunc-mix-low} over $(\theta^* - cn^{-1/2}, \theta^*)$ and density in \eqref{eq:trunc-mix-high} over $(\theta^*, \theta^* + cn^{-1/2})$ are in some strictly positive bounded interval $[\underline{b}, \bar{b}]$ that does not depend on the value of $\theta^*$.

There are three sets of units of interest:
\begin{itemize}
    \item $K_n = \{i: \theta_i \in (\theta^*, \theta^* + c n^{-1/2}) \text{ and } \sigma_i = 1\}$,
    \item $L_n = \{i: \theta_i \in (\theta^* - d n^{-1/2}, \theta^*) \text{ and } \sigma_i = 2\}$, and 
    \item $M_n = \{i: \theta_i \in (\theta^* - d n^{-1/2}, \theta^* - \frac{1}{2} d n^{-1/2}) \text { and } \sigma_i = 2\}$,
\end{itemize}
where $c, d$ are positive constants to be chosen. There are $\lfloor \alpha n \rfloor - 1$ realizations of $\theta_i$ greater than $\theta^*$ and $n - \lfloor \alpha n \rfloor$ realizations of $\theta_i$ smaller than $\theta^*$. Conditional on $\theta^*$, the cardinalities are consequently binomially distributed with
\begin{align*}
    \# K_n &\sim \text{Binomial}(\lfloor \alpha n \rfloor - 1, p_{K_n}(\theta^*)), \quad \text{where $p_{K_n}(\theta^*) > \underline{b} c n^{-1/2}$,} \\
    \# L_n &\sim \text{Binomial}(n - \lfloor \alpha n \rfloor, p_{L_n}(\theta^*)), \\
    & \qquad \text{where $\underline{b} d n^{-1/2} < p_{L_n}(\theta^*) < \bar{b} d n^{-1/2}$}, \\
    \# M_n &\sim \text{Binomial}(n - \lfloor \alpha n \rfloor, p_{M_n}(\theta^*)), \\
    & \qquad \text{where $p_{M_n}(\theta^*) > \frac{1}{2} \underline{b} d n^{-1/2}$.}
\end{align*}
Marginalizing over the event $A_n$ gives the same observation but removes the dependence on $\theta^*$. Hence for some constants $c_K, c_L, c_M > 0$, we have w.h.p.\ $\#K_n \ge c_K n^{1/2}$, $\#L_n \ge c_L n^{1/2}$, and $\#M_n \ge c_M n^{1/2}$. Furthermore $d > 0$ can be chosen sufficiently small such that $\# K_n > \# L_n$ w.h.p.

The rest of the argument focuses on the event where $\heta > \frac{10}{3} (c + d) n^{-1/2}$, which occurs with non-vanishing probability. Under this event, since $\heta > 0$, empirical Bayes selection will only mistakenly select units with $\sigma_i = 2$ in place of other units with $\sigma_i = 1$. In particular, for any $i$ in $K_n$ and $i'$ in $L_n$, we have $\theta_i > \theta_{i'}$ but
\[
    \htheta_i < \theta^* + cn^{-1/2} + \frac{1}{2}\heta < \theta^* - dn^{-1/2} + \frac{4}{5}\heta < \htheta_{i'}.
\]
So w.h.p.\ at least $\min(\# K_n, \# L_n) = \#L_n$ mistakes were made. In fact since $L_n$ consists of units immediately smaller than $\theta^*$ and the relative ordering of all units with $\sigma_i = 2$ does not change, all of $\# L_n$ will be mistakenly selected. This incurs a regret of at least
\[
    \frac{1}{n} \sum_{i \in L_n} (\theta^* - \theta_i) \ge \frac{1}{n} \sum_{i \in M_n} (\theta^* - \theta_i) \ge \frac{1}{n} \# M_n \cdot \frac{1}{2} d n^{-1/2} \ge \frac{1}{2} c_M d n^{-1},
\]
with high probability.

\section{Top-\texorpdfstring{$m$}{m} Selection in Simulation}
\label{sec:empirics}

We illustrate \Cref{thm:rn-squared} with a realistic simulation, based on the Upworthy dataset of internet experiments conducted between 2013 and 2015.\footnote{From the publicly accessible Upworthy Research Archive (\citet{matias2021upworthy}) which is downloadable at \url{https://osf.io/jd64p/}.} The dataset contains a list of experiments, along with effect sizes and standard errors. For the prior $G_0$, we fit a normal scale mixture with fixed components, parameterized only by the weights. The data and modelling details are described in \Cref{sec:sim-details}, and the notebook to reproduce the simulations and figures is available as an artifact\footnote{Also available on \url{https://github.com/facebookresearch/eb-selection}}.

We simulate a variety of signal-to-ratio regimes, and choices of family for the prior. In increasing order of flexibility, these are:
\begin{inlinelist}
    \item the family of normal priors,
    \item the family of scale mixtures of normals, and
    \item the family of all distributions.
\end{inlinelist}
The priors for these cases are estimated using the {\tt ebnm} R package (\citet{willwerscheid2021ebnm}). In particular, the normal scale mixture is estimated using adaptive shrinkage as in \citet{stephens2016fdr}, and the fully nonparametric case is estimated by nonparametric maximum likelihood estimator (NPMLE) (\citet{kiefer1956consistency}). Henceforth we refer to these three estimators as EB-NN, EB-NSM, and EB-NPMLE. This enables comparison of the performance of empirical Bayes methods under misspecification (when the restrictive EB-NN estimator is used), under a parsimonious and well-specified model (EB-NSM), and under a highly flexibly and well-specified model (EB-NPMLE). For the distribution $H_0$, we use the empirical distribution of standard errors in the dataset.

Top-$m$ selection here corresponds to selecting a subset of experiments, given a constraint on the subset size. We pick $m = \lfloor 0.1 n \rfloor$ and vary $n$, the number of simulated experiments,  showing the distribution of regret for each choice of $n$. For each $n$ we run $\num{1000}$ iterations of the selection simulation. In each iteration, we
\begin{enumerate}
    \item independently draw $n$ true treatment effects $\mu_i \sim G_0$ and noise standard deviations $\sigma_i \sim H_0$;
    \item generate the $n$ observations $X_i$, where $X_i \mid \mu_i, \sigma_i \sim \Normal(\mu_i, \sigma_i^2)$;
    \item fit three models for the prior distribution of treatment effects $\mu_i$: EB-NN, EB-NSM, and EB-NPMLE;
    \item compute the choice sets $\J{Bayes}$, $\J{EB-NN}$, $\J{EB-NSM}$, $\J{EB-NPMLE}$, and $\J{UN}$ corresponding to the oracle Bayes posterior mean estimators, the three empirical Bayes posterior mean estimators, and the unshrunk $X_i$;
    \item compute the regret relative to oracle Bayes selections, $\regret_{M} = \frac{1}{n} \sum_{i = 1}^n (\mathds{1}(i \in \J{Bayes}) - \mathds{1}(i \in \J{M}))\theta_i$
    for $M = \text{EB-NN},\allowbreak \text{EB-NSM},\allowbreak \text{EB-NPMLE},\allowbreak \text{UN}$.
\end{enumerate}
To assess performance in lower signal-to-noise regimes, we repeat this exercise with varying levels of sampling error. We use standard errors 1, 2 and 4 times greater than the baseline standard errors, corresponding to signal-to-noise ratios of roughly $1.3$, $0.7$ and $0.3$.

The normal scale mixture is a parametric model once the number of components and the scale parameters are fixed. As {\tt ebnm} fits by maximizing the likelihood, the remaining parameters—the weights—converge at $\bigO_p(n^{-1/2})$. Hence by \Cref{thm:rn-squared}, $\regret_\text{EB-NSM}$ is $\bigO_p(n^{-1})$. We have no such guarantees for $\regret_\text{EB-NN}$ or $\regret_\text{EB-UN}$, corresponding to the misspecified normal prior and the ``naive'' choice which selects the units with the largest $X_i$'s. EB-NPMLE is highly flexible and not misspecified, although its guaranteed convergence rate is very slow \citep{soloff2024multivariate}.

\Cref{fig:regret} shows regret as a function of the number of experiments, for each selection method and each value of the noise multiplier. As $n$ increases, the mean and $99$th percentile across simulations of $\regret_\text{EB-NSM}$ and $\regret_\text{EB-NPMLE}$ both exhibit declines consistent with $n^{-1}$ convergence, although the regret associated with the latter is larger, suggesting the NPMLE model incurs a cost from its greater flexibility. With just $\num{1000}$ experiments, the regret of the EB-NSM approach can be as low as $10^{-4}$ times the standard error of the noise. The normal prior performs better than the unshrunk selection procedure, but neither has regret approaching zero. These patterns are consistent across different noise levels, although the regret are lower with less noise, as the oracle prior and estimated prior are closer.

\begin{figure}[htbp]
    \centering
    \includegraphics[width=\linewidth]{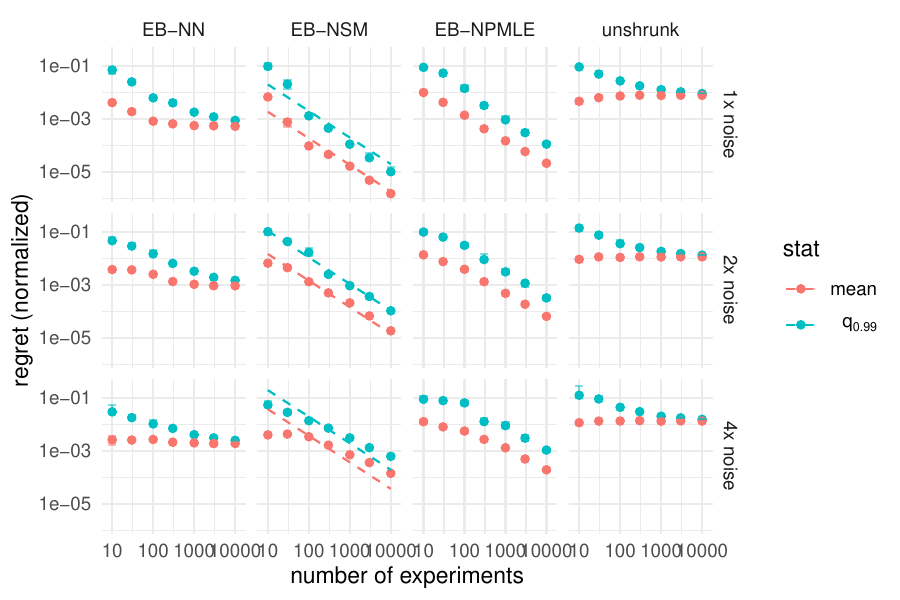}
    \caption{Regret $\regret$ as the number of experiments $n$ increases, on a log scale and normalized by the standard deviation of the noise. The $95\%$ confidence intervals are due to simulation uncertainty. For choice sets based on the correctly specified EB-NSM and EB-NPMLE models, both the $99$th percentile and the mean show a trend of $\bigO(n^{-1})$. Regret does not appear to converge to zero for choice sets based on EB-NN or the unshrunk estimates.}
    \label{fig:regret}
\end{figure}

We compute other quantities of interest from \eqref{eq:two-parts}, such as the proportion of mistakes in \Cref{fig:prop-mistakes} and the maximum magnitude of shrinkage error in \Cref{fig:magnitude}, as well as the $1$-Wasserstein distance between the true prior and the estimated prior in \Cref{fig:w1}. As expected, we see that the proportion of mistakes, their magnitude, and the 1-Wasserstein distance between the true and estimated prior in the correctly specified EB-NSM model all converge to zero at $n^{-1/2}$. The misspecified EB-NN model and the unshrunk procedure perform poorly in comparison, with the proportion and magnitude of mistakes not converging to zero, or even increasing, with the number of experiments. The most flexible model, EB-NPMLE, performs worse along every dimension than the more parsimonious EB-NSM, although the proportion and magnitude of its mistakes both converge to zero. 

\begin{figure}[htbp]
    \centering
    \includegraphics[width=\linewidth]{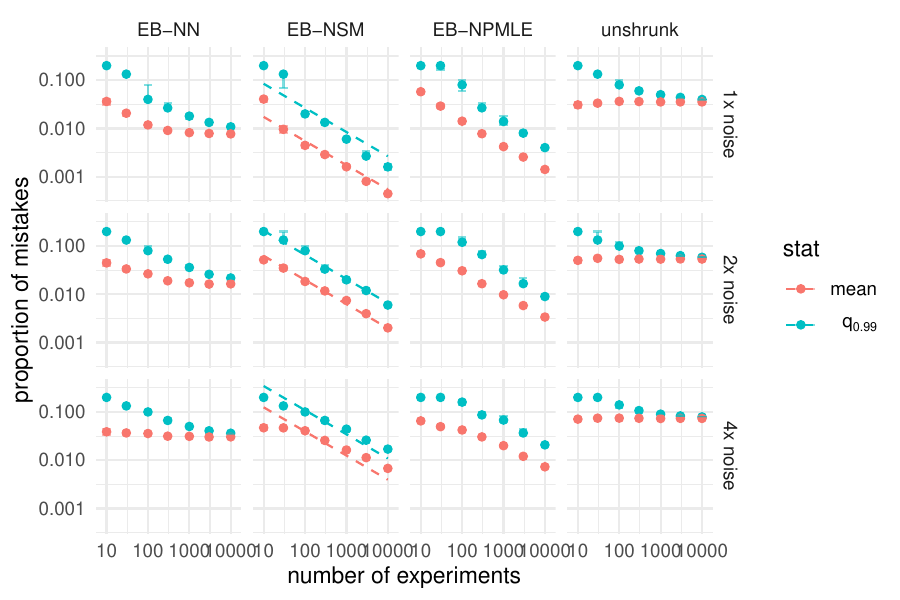}
    \caption{Proportion of mistakes as $n$ increases, on a log scale. The $95\%$ confidence intervals are due to simulation uncertainty. For the correctly specified $\J{EB-NSM}$, both the $99$th percentile and the mean show a trend of $\bigO(n^{-1/2})$. The highly flexible NPMLE shows a similar trend but generally makes more mistakes. The proportion does not appear to decrease towards zero for $\J{EB-NN}$ or $\J{UN}$.}
    \label{fig:prop-mistakes}
\end{figure}

\begin{figure}[htbp]
    \centering
    \includegraphics[width=\linewidth]{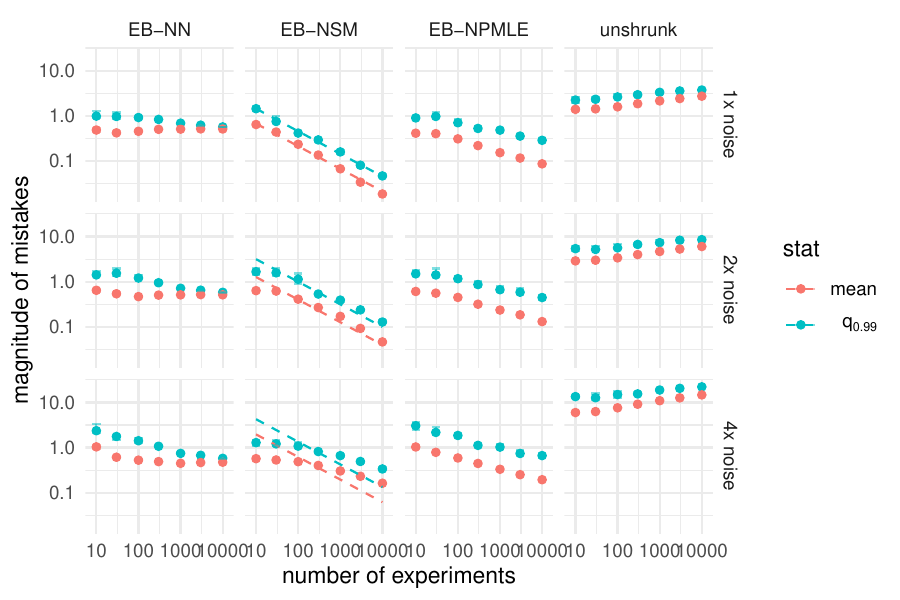}
    \caption{Maximum magnitude of shrinkage error as $n$ increases, on a log scale and normalized by the standard deviation of the noise. The $95\%$ confidence intervals are due to simulation uncertainty. For the correctly specified $\J{EB-NSM}$, both the $99$th percentile and the mean show a trend of $\bigO(n^{-1/2})$ in the lower noise settings. The maximum magnitude does not appear to decrease indefinitely for $\J{EB-NN}$ or $\J{UN}$.}
    \label{fig:magnitude}
\end{figure}

\begin{figure}[htbp]
    \centering
    \includegraphics[width=\linewidth]{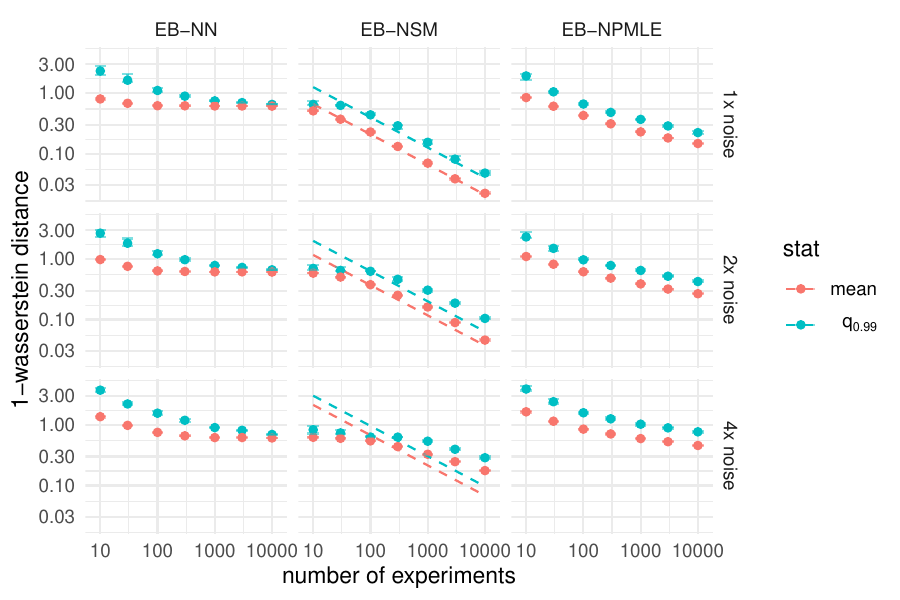}
    \caption{The $1$-Wasserstein distance between the true prior and the estimated prior, as $n$ increases, on a log scale and normalized by the standard deviation of the noise. The $95\%$ confidence intervals are due to simulation uncertainty. As $n$ gets large, both the $99$th percentile and the mean show a trend of $\bigO(n^{-1/2})$ for EB-NSM. The distance levels off away from zero for EB-NN because of misspecification. The distance for EB-NPMLE decreases at a slower rate than $\bigO(n^{-1/2})$.}
    \label{fig:w1}
\end{figure}

\subsection{Estimated standard error}

The simulations above assume the known standard error to be known, which is reasonable for large-scale online experiments where each experiments have million of units. We complement the simulations above to demonstrate how the noise in the estimated standard error will affect the performance of empirical Bayes methods, showing the regret as the number of units increase and the estimation for standard error improves in \Cref{fig:regret-est-se}.

\begin{figure}[htbp]
    \centering
    \includegraphics[width=\linewidth]{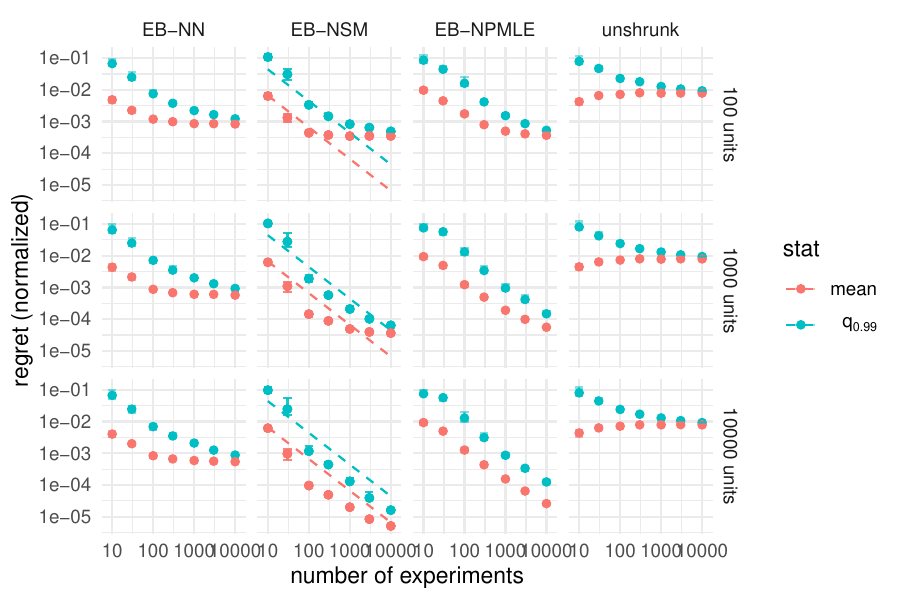}
    \caption{Regret $\regret$ as the number of experiments $n$ increases, on a log scale and normalized by the standard deviation of the noise. Error bars around each point are the $95\%$ confidence intervals from uncertainty due to simulation error. Our result holds better as the number of units increase.}
    \label{fig:regret-est-se}
\end{figure}

\section{Conclusion}

Our results show that empirical Bayes methods perform well in maximizing the aggregate value of the selected units, in the sense that the regret they incur converges to zero faster than the estimation error in the values themselves. This stands in contrast to prior work emphasizing the difficulty of accurately selecting the best units when the decision-maker incurs a discrete loss from each misclassification (e.g.\ \citet{lockwood2002uncertainty,lin2006loss,gu2020invidious}). This underscores that rather than selection being an inherently difficult problem, it depends on whether misclassification errors should be weighted by their severity in the utility function. Finally, we note that many extensions and variations on this setting are yet to be fully explored, including characterizing the performance of decision rules for the frequentist analog of the Bayesian regret we study, treating the true values of units as non-stochastic;\footnote{Specifically, studying the utility $U(\J{EB})$ under permutation invariance as outlined in \citet{weinstein2021permutation}.} improving performance by incorporating unit-specific covariates into the analysis; and extending to an empirical Bayes knapsack problem where the selected units incur heterogeneous costs.

As discussed in \Cref{sec:intro}, the frequentist optimal solution requires unavailable oracle knowledge of the order statistics of $\mu_i$'s. This implies that the empirical Bayes solution is not optimal in a frequentist sense, with mainly two gaps:
\begin{inlinelist}
    \item the optimal solution in \citet[Theorem 1]{weinstein2021permutation} is the Bayesian solution with a uniform prior on the permutations of $\mu_i$'s, while empirical Bayes uses $\hG$ instead;
    \item \citet{weinstein2021permutation} focused on the loss for a specific set of $\mu_i$'s, while our analysis averages this over $G_0$.
\end{inlinelist}

We suspect these gaps are small. For the first gap, \citet[Section 6]{weinstein2021permutation} conjectures that the Bayesian solution using the e.c.d.f.\ of $\mu_i$'s is asymptotically optimally. We believe this can be reasonably recovered as $\hG$ when the class of priors $\mathcal{M}$ is sufficiently large. For the second gap, \citet[Theorem 1]{weinstein2021permutation} showed that minimizing the loss is equivalently to minimizing the loss averaged over a uniform permutation of $\mu_i$'s. Asymptotically this should be close to the loss averaged over $G_0$, our regret $\regret$. Putting this together, both the solution and loss function are similar between the frequentist and the empirical Bayes settings, hinting at some loose frequentist optimality of the empirical Bayes approach.

\bibliographystyle{ACM-Reference-Format}
\bibliography{papers}

\begin{acks}
We thank Eytan Bakshy, Kevin Chen, Matt Goldman, Daniel Jiang, Jelena Markovic, Sepehr Akhavan Masouleh, Jingang Miao, Adam Obeng, Alex Peysakovich, Okke Schrijvers, Yanyi Song, Daniel Ting and Mark Tygert for their comments and suggestions.
\end{acks}

\appendix

\section{Supporting Proofs}
\label{sec:real-analysis}

\begin{lemma}
\label{lma:exp-family}
    For any exponential family $G(\eta)$ with finite variance, the mapping $\eta \mapsto G(\eta)$ is locally Lipschitz with respect to the $L_1$-norm (or equivalently any $L_p$-norm) in the domain and the $1$-Wasserstein distance in the codomain.
\end{lemma}
\begin{proof}
    Suppose the exponential family is given by
    \[
        \exp(\eta' T(x) - A(\eta)) h(x) \,d\mu(x)
    \]
    with $X$ as the variable and $T = T(X)$ as the canonical statistic. We wish to show that $| \int f(x) \,dG(\eta_1) - \int f(x) \,dG(\eta_0) | / \| \eta_1 - \eta_0 \|_1$ bounded for $\eta_0 \ne \eta_1$, locally in $\eta$ and uniformly over all functions $f$ with Lipschitz constant $1$. By mean value theorem,
    \begin{align*}
        &\left|\int f(x) \,dG(\eta_1) - \int f(x) \,dG(\eta_0)\right| \\
        ={}& \left|(\eta_1 - \eta_0)' \left.\nabla_\eta \int f(x) \,dG(\eta)\right|_{\eta = \eta'}\right| \\
        \le{}& \| \eta_1 - \eta_0 \|_1 \left\| \left.\nabla_\eta \int f(x) \,dG(\eta)\right|_{\eta = \eta'} \right\|_\infty.
    \end{align*}
    for some $\eta'$ that is also local. By \citet[Theorem 2.4]{keener2010theoretical},
    \begin{align*}
        \nabla_\eta \int f(x) \,dG(\eta) &= \int f(x) (T(x) - A'(\eta)) \,dG(\eta) \\
        &= \cov_\eta(f(X), T(X)).
    \end{align*}
    For $i$-th component of the covariance vector, we have
    \begin{align*}
        | \cov_\eta(f(X), T(X))_i | &= | \cov_\eta(f(X), T_i(X)) | \\
        &\le \sqrt{\var_\eta f(X) \var_\eta T_i(X)}.
    \end{align*}
    $\var_\eta T_i(X)$ is given by $A''(\eta)_{ii}$, which is continuous by \citet[Theorem 2.2]{brown1986fundamentals} and thus locally bounded in $\eta$. For $\var_\eta f(X)$, suppose $X'$ is an i.i.d.\ copy of $X$, then
    \begin{align*}
        \var_\eta f(X) & = \frac{1}{2} (\var_\eta f(X) + \var_\eta f(X')) \\
        & = \frac{1}{2} \var_\eta [f(X) - \EE_\eta f(X) + \EE_\eta f(X') - f(X')] \\
        & = \var_\eta [f(X) - f(X')] \\
        & \le \var_\eta (X - X') \\
        & \le \var_\eta X,
    \end{align*}
    which is also continuous and thus locally bounded in $\eta$.
\end{proof}

\begin{lemma}
\label{lma:shrinkage-increasing}
    Under \Cref{asm:compact-h}, the posterior mean function $f_{G, \sigma}$ is strictly increasing and differentiable in $X$, and thus admits an inverse $f_{G, \sigma}^{-1}$ over its image.
\end{lemma}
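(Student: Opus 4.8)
The plan is to differentiate $f_{\eta,\sigma}$ with respect to $X$ and show the derivative equals $\var(\mu \mid X)/\sigma^2$, which is strictly positive whenever the prior is nondegenerate. Write $k(X,\mu) = \frac{1}{\sigma}\phi\!\left(\frac{X-\mu}{\sigma}\right)$, so that $f_{\eta,\sigma}(X) = \big(\int \mu\, k(X,\mu)\,dG(\eta)\big) / \big(\int k(X,\mu)\,dG(\eta)\big)$. The elementary identity $\partial_X k(X,\mu) = \frac{\mu - X}{\sigma^2}\, k(X,\mu)$ lets me express the derivatives of numerator and denominator as moments of $\mu$ against the posterior law, and the quotient rule then collapses to
\[
\frac{d}{dX} f_{\eta,\sigma}(X) = \frac{1}{\sigma^2}\Big(\EE[\mu(\mu - X)\mid X] - \EE[\mu \mid X]\,\EE[\mu - X\mid X]\Big) = \frac{\var(\mu \mid X)}{\sigma^2}.
\]
Equivalently this is Tweedie's formula $f_{\eta,\sigma}(X) = X + \sigma^2 (\log p_\sigma)'(X)$ (with $p_\sigma$ the marginal density of $X$) differentiated once; either route is the classic result of \cite{efron2011tweedie} alluded to above.

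Second, I would justify the interchange of differentiation and integration. Because the Gaussian kernel $k(X,\mu)$ decays in $\mu$ faster than any polynomial, the integrals $\int |\mu|^j k(X,\mu)\,dG(\eta)$ are finite for every $j \ge 0$, and on any compact $X$-interval they are dominated by an integrable (indeed bounded) function; the same holds for the $X$-derivatives of the integrand, which are $k$ times polynomials in $(\mu - X)/\sigma$. Dominated convergence then licenses differentiating numerator and denominator in $X$, and since the denominator $\int k(X,\cdot)\,dG(\eta) > 0$ everywhere (a fixed $\sigma > 0$ already suffices for this pointwise statement, and \Cref{asm:compact-h} only keeps $\sigma$ in a compact range), the quotient $f_{\eta,\sigma}$ is (continuously) differentiable in $X$ with the derivative computed above.

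Finally I would invoke nondegeneracy. The posterior law of $\mu$ given $X$ is $G(\eta)$ reweighted by the strictly positive density $k(X,\cdot)$, hence has the same support as $G(\eta)$; since $G(\eta)$ is nondegenerate (as assumed throughout the setup) and has finite second moment under this reweighting, $\var(\mu \mid X) \in (0,\infty)$ for every $X$. Therefore $f_{\eta,\sigma}'(X) > 0$ for all $X$, so $f_{\eta,\sigma}$ is strictly increasing and continuously differentiable, hence a bijection onto its (open-interval) image with a well-defined inverse $f_{\eta,\sigma}^{-1}$. There is no substantial obstacle: the only point requiring care is the differentiation-under-the-integral step, which the super-polynomial decay of the Gaussian kernel makes routine, and the rest is a one-line computation together with the observation that reweighting by a strictly positive function preserves nondegeneracy.
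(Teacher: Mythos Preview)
Your proposal is correct and follows essentially the same approach as the paper: both establish $\frac{d}{dX} f_{\eta,\sigma}(X) = \sigma^{-2}\var(\mu \mid X) > 0$ via the Tweedie/Efron identity and nondegeneracy of the prior. The paper's proof is a one-line citation of \cite{efron2011tweedie}, while you have additionally spelled out the quotient-rule computation, the dominated-convergence justification for differentiating under the integral, and the reason the posterior is nondegenerate; these elaborations are sound but do not constitute a different route.
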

\begin{proof}
    From \citet{efron2011tweedie}, under $\mu \sim G$ and $X \mid \mu \sim \Normal(\mu, \sigma^2)$, we have $ \nabla_x f_{G, \sigma}(x) = \sigma^{-2} \var(\mu \mid X = x) > 0$.
\end{proof}

\begin{lemma}
\label{lma:p-diff}
    With \Cref{asm:compact-h}, the c.d.f.\ $P$ of $\theta_i$ is continuously differentiable with positive derivative, or equivalently, $\theta_i$ has positive continuous density.
\end{lemma}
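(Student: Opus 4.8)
The plan is to write down the density of $\theta_i$ explicitly, by conditioning on $\sigma_i$, pushing the Gaussian location‑mixture density of $X_i$ through the map $f_{\eta_0,\sigma_i}$, and then averaging over $\sigma_i$; \Cref{asm:compact-h} will enter only to make this last averaging step continuous. Concretely, condition on $\sigma_i = \sigma$. Since $\mu_i$ and $\sigma_i$ are independent, $X_i \mid \sigma_i = \sigma$ has the density $g_\sigma(x) = \sigma^{-1}\int \phi\big((x-\mu)/\sigma\big)\,dG_0(\mu)$, which is strictly positive and smooth on $\RR$. By \Cref{lma:shrinkage-increasing}, $x\mapsto f_{\eta_0,\sigma}(x)$ is strictly increasing with derivative $f'_{\eta_0,\sigma}(x) = \sigma^{-2}\var(\mu\mid X=x) > 0$, and $x\mapsto\var(\mu\mid X=x)$ is continuous by the reasoning in the proof of \Cref{lma:continuity} applied to the second moment (the integrand $\mu^2\phi((x-\mu)/\sigma)$ is dominated by a $G_0$‑integrable function locally in $x$), so $f_{\eta_0,\sigma}$ is a $C^1$‑diffeomorphism of $\RR$ onto an open interval $I_\sigma$. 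The change‑of‑variables formula then gives that, conditional on $\sigma_i = \sigma$, the variable $\theta_i = f_{\eta_0,\sigma}(X_i)$ has density $p_\sigma(t) = g_\sigma\big(f^{-1}_{\eta_0,\sigma}(t)\big)\big/f'_{\eta_0,\sigma}\big(f^{-1}_{\eta_0,\sigma}(t)\big)$ for $t\in I_\sigma$, and $0$ otherwise.

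Next I would identify the support. The posterior of $\mu$ given $X=x$ has the same support as $G_0$, so its mean lies strictly inside the convex hull of $\supp(G_0)$, and it concentrates at the relevant endpoint as $x\to\pm\infty$; hence $I_\sigma = (\ell, r)$ with $\ell = \inf\supp(G_0)$ and $r = \sup\supp(G_0)$ (possibly infinite), independently of $\sigma$. By Tonelli's theorem, $\theta_i$ then has unconditional density $p(t) = \int_{\supp(H_0)} p_\sigma(t)\,dH_0(\sigma)$, supported on $(\ell, r)$. Positivity of $p$ on $(\ell, r)$ is immediate: for each $\sigma$ and each $t \in (\ell,r)$, $f^{-1}_{\eta_0,\sigma}(t)$ is finite, $g_\sigma$ is everywhere positive, and $f'_{\eta_0,\sigma} > 0$, so $p_\sigma(t) > 0$ and therefore $p(t) > 0$.

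The remaining work, and the only place \Cref{asm:compact-h} is used, is continuity of $p$. The key step is that $(\sigma,t)\mapsto p_\sigma(t)$ is jointly continuous on $\supp(H_0)\times(\ell,r)$: this follows from the joint continuity of $(\sigma,x)\mapsto g_\sigma(x)$ and $(\sigma,x)\mapsto f'_{\eta_0,\sigma}(x)$ (by the arguments of \Cref{lma:continuity}, using that $\supp(H_0)$ is bounded away from $0$), of $(\sigma,t)\mapsto f^{-1}_{\eta_0,\sigma}(t)$ (\Cref{lma:continuity}), and of the fact that $f'_{\eta_0,\sigma}$, being continuous and strictly positive, is bounded away from $0$ on compact subsets of $\supp(H_0)\times\RR$. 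Since $\supp(H_0)$ is compact, on any compact interval $[t_1,t_2]\subset(\ell,r)$ the function $p_\sigma(t)$ is bounded uniformly in $(\sigma,t)$, and dominated convergence (the uniform bound is $H_0$‑integrable since $H_0$ is a probability measure) yields continuity of $p$ on $[t_1,t_2]$, hence on $(\ell,r)$. A continuous density implies $P\in C^1$ with $P' = p$, so $P$ is continuously differentiable with strictly positive derivative on the interior $(\ell, r)$ of the support of $\theta_i$ — which is all the downstream arguments use, since $P^{-1}(1-\alpha)$ and the shrinking neighborhoods around it appearing in \Cref{lma:bounded-x} and the proof of \Cref{thm:rn-squared} lie in $(\ell,r)$.

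I expect the main obstacle to be the bookkeeping around $\sigma_i$‑heterogeneity — verifying the joint continuity and uniform local boundedness of $p_\sigma(t)$ needed to push the average over $H_0$ through the limit, which is exactly why \Cref{asm:compact-h} is imposed. A minor cosmetic point, not needed for the applications, is the behavior at a finite endpoint of $\supp(G_0)$: there $f'_{\eta_0,\sigma}(f^{-1}_{\eta_0,\sigma}(t))\to 0$, so $p_\sigma$ is a $0/0$ limit, but $g_\sigma$ decays like a Gaussian tail while $\var(\mu\mid X=x)$ decays only exponentially as its argument tends to $\pm\infty$, so $p$ extends continuously by $0$.
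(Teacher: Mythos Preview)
Your argument is correct and follows the same architecture as the paper's: condition on $\sigma$, push the density of $X$ through the strictly increasing map $f_{\eta_0,\sigma}$ via change of variables, and then integrate over $H_0$, using compactness of $\supp(H_0)$ to pass the limit through the integral. The one technical difference worth noting is how the needed regularity is obtained. The paper takes a Fourier--inversion shortcut: the characteristic function of $X\mid\sigma$ is $\varphi_{G_0}(t)e^{-\sigma^2 t^2/2}$, whose Gaussian decay makes it (and all its polynomial multiples) integrable, so inversion plus dominated convergence give that $g_\sigma$ and all its $x$--derivatives are jointly continuous in $(\sigma,x)$; continuity of $f'_{\eta_0,\sigma}$ then drops out of Tweedie's representation of $f_{\eta_0,\sigma}$ in terms of $\log g_\sigma$. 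You instead invoke $f'_{\eta_0,\sigma}(x)=\sigma^{-2}\var(\mu\mid X=x)$ directly and check continuity of the posterior second moment by dominating $\mu^2\phi((x-\mu)/\sigma)$. Both routes work; the Fourier argument is slicker in that it delivers $C^\infty$ smoothness at once, while your approach is more elementary and is more explicit about positivity and about the support $(\ell,r)$ of $\theta_i$, points the paper's proof leaves implicit.
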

\begin{proof}
    The characteristic function of $X \mid \sigma$ is given by $\varphi_G(t) \allowbreak \exp(-\sigma^2 t^2 / 2)$, where $\varphi_G$ is the characteristic function of $G$. Since $|\varphi_G(t) \exp(-\sigma^2 t^2 / 2)|$ is bounded by $\exp(-\sigma^2 t^2 / 2)$ which is integrable, $X \mid \sigma$ has bounded continuous density (\citet[Theorem 3.3.14]{durrett2019probability}). In fact the density is given by
    \[
        p(X \mid \sigma) = \frac{1}{2\pi} \int e^{-itX} \varphi_G(t) \exp(-\sigma^2 t^2 / 2) \,dt.
    \]
    Since $\sigma$ is bounded away from $0$, dominated convergence theorem implies joint continuity of the density above in $(\sigma, X)$. In fact, by dominated convergence theorem and the fact that $t^k \exp(-\sigma^2 t^2 / 2)$ is integrable for all integer $k \ge 0$, we can see that all higher derivatives of the density with respect to $X$ are continuous in $(\sigma, X)$.
    
    Consider the mapping $X \mapsto \theta = f_{G_0, \sigma}(X)$. By \Cref{lma:shrinkage-increasing} it has a strictly positive derivative. Furthermore, since the derivative can be written in terms of the derivatives of $p(X \mid \sigma)$ (\citet{efron2011tweedie}), it is also continuous in $(\sigma, X)$. In other words, the density $p(\theta \mid \sigma)$ is continuous in $(\sigma, X)$.
    
    \Cref{asm:compact-h} assumes the support of $\sigma$ is compact, so the density $p(\theta \mid \sigma)$ is naturally pointwise equicontinuous when viewed as a family of functions indexed by $\sigma$. Now for any $\theta$ and any $\varepsilon > 0$, we can select $\delta > 0$ such that for all $\sigma$ and all $\theta'$ with $|\theta' - \theta| < \delta$, we have $|p(\theta' \mid \sigma) - p(\theta \mid \sigma)| < \varepsilon$ and so
    \begin{align*}
        & \left| \int p(\theta' \mid \sigma) \,dH_0 - \int p(\theta \mid \sigma) \,dH_0 \right| \\
        \le{}& \int |p(\theta' \mid \sigma) - p(\theta \mid \sigma)| \,dH_0 < \varepsilon,
    \end{align*}
    and the marginal density of $\theta$ is continuous.
\end{proof}

\begin{lemma}
\label{lma:stack-ex}
    Let $A$ be a metric space. Suppose $f(a, x)$ as a function from $A \times \RR$ to $\RR$ is continuous and has an inverse with respect to $x$, i.e.\ for all $a$ there exists $f^{-1}_a(\cdot)$ such that $f^{-1}_a \circ f(a, \cdot) = \text{id}_\RR$. Then $(a, y) \mapsto f_a^{-1}(y)$ is also continuous.
\end{lemma}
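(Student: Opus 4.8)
The plan is to verify sequential continuity, which suffices since the natural domain $\{(a,y): y\in f(a,\RR)\}$ of the map $(a,y)\mapsto f^{-1}_a(y)$ sits inside the metrizable space $A\times\RR$. So fix $(a^*,y^*)$ with $y^*\in f(a^*,\RR)$, take an arbitrary sequence $(a_k,y_k)\to(a^*,y^*)$ with $y_k\in f(a_k,\RR)$, and set $x_k\coloneqq f^{-1}_{a_k}(y_k)$ and $x^*\coloneqq f^{-1}_{a^*}(y^*)$, so that $f(a_k,x_k)=y_k$ and $f(a^*,x^*)=y^*$. The goal is to show $x_k\to x^*$.

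The one structural fact needed is that, for each fixed $a$, the map $f(a,\cdot):\RR\to\RR$ is strictly monotone: it is continuous, and it is injective because it possesses a left inverse, and a continuous injective function on an interval is strictly monotone. On its image $f^{-1}_a$ is then the genuine inverse of this monotone bijection, so in particular $f(a,f^{-1}_a(y))=y$ for all $y\in f(a,\RR)$.

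The core of the proof is a trapping argument. Fix $\varepsilon>0$ and suppose first that $f(a^*,\cdot)$ is strictly increasing (the decreasing case is symmetric). Then $f(a^*,x^*-\varepsilon)<y^*<f(a^*,x^*+\varepsilon)$. Since $a\mapsto f(a,x^*-\varepsilon)$ and $a\mapsto f(a,x^*+\varepsilon)$ are continuous (sections of the jointly continuous $f$) and $y_k\to y^*$, for all large $k$ we have $f(a_k,x^*-\varepsilon)<y_k<f(a_k,x^*+\varepsilon)$. In particular $f(a_k,x^*-\varepsilon)<f(a_k,x^*+\varepsilon)$, so the monotone map $f(a_k,\cdot)$ must be strictly increasing; combined with $f(a_k,x^*-\varepsilon)<f(a_k,x_k)=y_k<f(a_k,x^*+\varepsilon)$, this forces $x^*-\varepsilon<x_k<x^*+\varepsilon$ for all large $k$. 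As $\varepsilon>0$ was arbitrary, $x_k\to x^*$, which is exactly the claimed continuity of $(a,y)\mapsto f^{-1}_a(y)$.

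I do not expect a genuine obstacle. The only points requiring care are (i) invoking the classical ``continuous $+$ injective on an interval $\Rightarrow$ strictly monotone'' fact, and (ii) observing within the trapping step that $f(a_k,\cdot)$ is strictly increasing (not decreasing) for large $k$, which is precisely what makes the final pair of inequalities have the correct orientation; both are short once the joint continuity of $f$ is in hand, and that hypothesis is discharged by Lemma~\ref{lma:continuity} wherever this lemma is applied.
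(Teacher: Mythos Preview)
Your proof is correct and takes a genuinely different route from the paper's argument. Both proofs proceed by sequential continuity and both ultimately exploit that each slice $f(a,\cdot)$ is strictly monotone, but the organization differs. The paper first shows that the sequence $x_n=f^{-1}_{a_n}(y_n)$ is bounded (via a trapping argument using monotonicity of $f^{-1}_a$ and continuity of $f$), then invokes Bolzano--Weierstrass and checks that every convergent subsequence has limit $f^{-1}_{a^*}(y^*)$. You instead run a single $\varepsilon$-trapping step directly at the target $x^*$: from $f(a^*,x^*-\varepsilon)<y^*<f(a^*,x^*+\varepsilon)$ and joint continuity you get $f(a_k,x^*-\varepsilon)<y_k<f(a_k,x^*+\varepsilon)$ for large $k$, and then strict monotonicity of $f(a_k,\cdot)$ sandwiches $x_k$ in $(x^*-\varepsilon,x^*+\varepsilon)$. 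Your handling of the orientation issue---deducing that $f(a_k,\cdot)$ is increasing (not decreasing) for large $k$ from the inequality $f(a_k,x^*-\varepsilon)<f(a_k,x^*+\varepsilon)$---is the one non-automatic step, and you treat it correctly. Your argument is shorter and avoids the subsequence machinery; the paper's bounded-plus-subsequential-limit pattern is a more general-purpose template but is heavier than needed here. You were also more careful than the paper in making explicit that $f^{-1}_a$ is only guaranteed to be a two-sided inverse on the image $f(a,\RR)$, which is the right domain on which to assert continuity.
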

\begin{proof}
    Let $(a_n, y_n) \to (a^*, y^*)$. It suffices to show that
    \[
        f_{a_n}^{-1}(y_n) \to f_{a^*}^{-1}(y^*).
    \]

    We first show that the sequence $x_n = f_{a_n}^{-1}(y_n)$ is bounded. The sequence $y_n$ is bounded, so it is contained in some interval $(c + \varepsilon, d - \varepsilon)$ for some fixed $\varepsilon > 0$ and $c, d$. By continuity of $f$, for $a$ sufficiently close to $a^*$, we have $f(a, f_{a^*}^{-1}(c))$ must be within $\varepsilon$ of $f(a^*, f_{a^*}^{-1}(c)) = c$. Similarly, $f(a, f_{a^*}^{-1}(d))$ can be within $\varepsilon$ of $d$. Since $f_a^{-1}$ is the inverse of a continuous function, it is monotonic. For $a$ sufficiently close to $a^*$,
    \begin{align*}
        &~ \{y_n\} \text{ is bounded by } c + \varepsilon \text{ and } d - \varepsilon \\
        \Leftrightarrow &~ \{f(a, f_a^{-1}(y_n))\} \text{ is bounded by } c + \varepsilon \text{ and } d - \varepsilon \\
        \Rightarrow &~ \{f(a, f_a^{-1}(y_n))\} \text{ is bounded by } f(a, f_{a^*}^{-1}(c)) \text{ and } f(a, f_{a^*}^{-1}(d)) \\
        \Leftrightarrow &~ \{f_a^{-1}(y_n)\} \text{ is bounded by } f_{a^*}^{-1}(c) \text{ and } f_{a^*}^{-1}(d)
    \end{align*}
    So for sufficiently large $n$, $a_n$ is sufficiently close to $a^*$, and $f_{a_n}^{-1}(y_n)$ is bounded.

    Since the sequence $x_n = f_{a_n}^{-1}(y_n)$ is bounded, it must have a convergent subsequence. Consider any of such convergent subsequence indexed by $n_k$. We have
    \begin{align*}
        f(a^*, \lim_{k \to \infty} x_{n_k}) & = f(\lim_{k \to \infty} a_{n_k}, x_{n_k}) \\
        & = \lim_{k \to \infty} f(a_{n_k}, x_{n_k}) \\
        & = \lim_{k \to \infty} f(a_{n_k}, f_{a_n}^{-1}(y_n)) \\
        & = y^* \\
        & = f(a^*, f_{a^*}^{-1}(y^*)),
    \end{align*}
    and thus $\lim_{k \to \infty} x_{n_k} = f_{a^*}^{-1}(y^*)$. Since $f_{a_n}^{-1}(y_n)$ is bounded and any of its convergent subsequence converges to the same limit $f_{a^*}^{-1}(y^*)$, it also converges to the same limit, completing the proof of continuity.
\end{proof}

\begin{lemma}
\label{lma:op}
    If for any non-decreasing divergent sequence of real numbers $(a_n)_{n \in \NN}$ the sequence of random variables $(X_n)_{n \in \NN}$ is $\bigO_p(a_n)$, then it is also $\bigO_p(1)$.
\end{lemma}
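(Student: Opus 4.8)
The plan is to prove the contrapositive: if $(X_n)$ is not $\bigO_p(1)$, then there exists a nondecreasing sequence $a_n \to \infty$ for which $(X_n)$ is not $\bigO_p(a_n)$. Throughout I would use the reformulation that, for a nondecreasing positive sequence $(b_n)$, one has $X_n = \bigO_p(b_n)$ if and only if $\lim_{M \to \infty} \limsup_{n} \PP(|X_n| > M b_n) = 0$; this is equivalent to the ``for all $n$'' version of the definition because each $X_n$ is a.s.\ finite, so any finite collection of initial terms can be absorbed by enlarging the constant.

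First I would unpack the assumption that $(X_n)$ is not $\bigO_p(1)$. Since $M \mapsto \limsup_n \PP(|X_n| > M)$ is nonincreasing in $M$, this failure yields a single $\varepsilon > 0$ with $\limsup_n \PP(|X_n| > M) \ge \varepsilon$ for every $M > 0$. Then, for each $k \in \NN$, applying this with threshold $M = k^2$ shows that there are infinitely many indices $n$ with $\PP(|X_n| > k^2) > \varepsilon/2$; choosing one such index for each $k$, each larger than the previous choice, produces a strictly increasing sequence $n_1 < n_2 < \cdots$ with $\PP(|X_{n_k}| > k^2) > \varepsilon/2$ for all $k$.

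Next I would define $a_n = k$ for $n_k \le n < n_{k+1}$, and $a_n = 1$ for $n < n_1$, so that $(a_n)$ is nondecreasing and divergent. For any fixed $M$, every index $n_k$ with $k \ge M$ satisfies $M a_{n_k} = Mk \le k^2$, hence $\PP(|X_{n_k}| > M a_{n_k}) \ge \PP(|X_{n_k}| > k^2) > \varepsilon/2$. Therefore $\limsup_n \PP(|X_n| > M a_n) \ge \varepsilon/2$ for every $M$, so $\lim_{M \to \infty} \limsup_n \PP(|X_n| > M a_n) \ge \varepsilon/2 > 0$, which says precisely that $(X_n)$ is not $\bigO_p(a_n)$; this completes the contrapositive.

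The argument is elementary, and I do not anticipate a substantive obstacle. The only points needing care are in negating $\bigO_p(1)$ so as to extract a \emph{single} $\varepsilon$ valid uniformly over all thresholds $M$, and in arranging the indices $n_k$ to be strictly increasing --- which is exactly the reason for working with the $\limsup$ (``infinitely many $n$'') formulation of $\bigO_p$ rather than a bare supremum over $n$.
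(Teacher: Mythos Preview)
Your proof is correct and follows essentially the same route as the paper's: both prove the contrapositive by extracting a uniform $\varepsilon$ from the failure of $\bigO_p(1)$, constructing a strictly increasing sequence $n_k$ with $\PP(|X_{n_k}|>k^2)$ bounded below, defining the slow sequence $a_n=k$ on $[n_k,n_{k+1})$, and then using $Mk\le k^2$ for large $k$ to contradict $X_n=\bigO_p(a_n)$. The only cosmetic differences are that you phrase the argument via the $\limsup$ characterization (yielding $\varepsilon/2$ rather than $\varepsilon$) and exhibit the failure of $\bigO_p(a_n)$ directly, whereas the paper invokes the hypothesis on the constructed sequence and derives a contradiction; the underlying construction and the key inequality are identical.
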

\begin{proof}
    Suppose $X_n \ne \bigO_p(1)$. Then there exists $\varepsilon > 0$ such that for all $M > 0$, there are infinitely many $n$ such that
    \begin{equation}
    \label{eq:not-op1}
        \PP(|X_n| > M) \ge \varepsilon.
    \end{equation}
    Take $n_1$ to be the smallest $n$ satisfying \eqref{eq:not-op1} with $M = 1$. For $i > 1$, take $n_i$ to be the smallest $n > n_{i-1}$ satisfying \eqref{eq:not-op1} with $M = i^2$. Specifically, $(n_i)_{i \in \NN}$ is a strictly increasing sequence such that
        \[
            \PP(|X_{n_i}| > i^2) \ge \varepsilon \quad \text{for all $i$.}
        \]

    Now we are ready to set up a sequence that grows sufficiently slowly to cause a contradiction. For any $n$, take $b_n = i$ where $n \in [n_i, n_{i+1})$. Since $(n_i)_{i \in \NN}$ is strictly increasing and only takes values in integers, $(b_n)_{i \in \NN}$ is a non-decreasing sequence with $\lim_{n \to \infty} b_n = \infty$. So $X_n = \bigO_p(b_n)$ and there exists $M', N'$ such that
        \[
            \PP(|X_n| / b_n > M') < \varepsilon \quad \text{for all $n > N'$.}
        \]
    So for sufficiently large $i > M'$,
        \[
            \varepsilon > \PP(|X_{n_i}| / b_{n_i} > M') = \PP(|X_{n_i}| > M' i) \ge \PP(|X_{n_i}| > i^2) \ge \varepsilon,
        \]
    leading to a contradiction.
\end{proof}

\section{Simulation details}
\label{sec:sim-details}

Each experiment in the Upworthy Research Archive dataset involves two or more treatments corresponding to various combinations of headlines and image ``packages'' associated with an article. The number of impressions and clicks are recorded for each package. The metric of interest is the click-through rate, defined as the ratio of clicks to impressions. We filter out article-package pairs with fewer than $\num{1000}$ impressions or $100$ clicks, to ensure normality approximations are reasonable. For the $\num{4677}$ articles with at least two remaining packages, we arbitrarily consider the one with the most impressions to be the control group and the one with the second-most to be the treatment group, omitting any other packages for that article in the data. From these data, we compute the effect size estimate and the standard error for each experiment. The top-$m$ selection problem is hence selecting the subset of articles, subject to a constraint on the number of articles that can be treated.

For the prior, we applied EB-NN, EB-NSM and EB-NPMLE to the real data. \Cref{fig:delta} shows the density of the unshrunk treatment effects, as well as the observation densities implied by three estimated prior distributions corresponding to three different prior families. EB-NN is clearly misspecified and has thinner tails than the observations, indicating that the distribution of prior effects is not well approximated by a normal distribution. Both EB-NSM and EB-NPMLE result in close fits to the observed data and more realistic tail behavior. As a result, we base our simulation on the more parsimonious model of the two, EB-NSM.

\begin{figure}[htbp]
    \centering
    \includegraphics[width=\columnwidth]{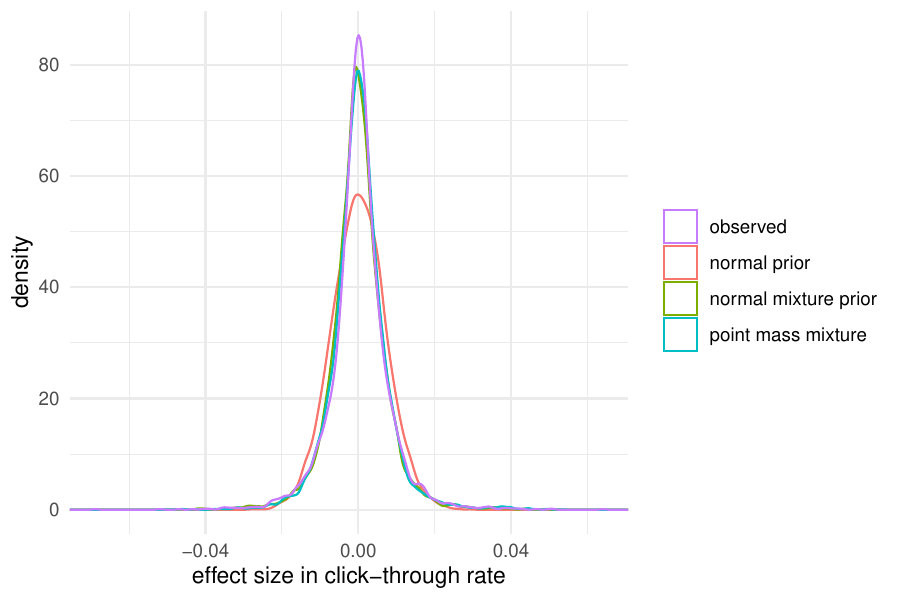}
    \caption{The density of the observed $X$, compared to the densities of observed $X$ as generated by an estimated normal-normal model (EB-NN), an estimated normal scale mixture model (EB-NSM) and a model estimated by NPMLE (EB-NPMLE).}
\label{fig:delta}
\end{figure}

\end{document}